\let\mathbb\mathds
\newtheorem{theorem}{Theorem}[section]
\newtheorem{corollary}[theorem]{Corollary}
\newtheorem{lemma}[theorem]{Lemma}
\newtheorem{conjecture}[theorem]{Conjecture}
\theoremstyle{definition}
\newtheorem{definition}[theorem]{Definition}
\newtheorem{remark}[theorem]{Remark}
\DeclareMathOperator{\supp}{\mathrm{supp}}
\DeclareMathOperator{\bbE}{\mathbb{E}}
\DeclareMathOperator{\Span}{\mathrm{Span}}
\DeclareMathOperator{\pE}{\Tilde{\mathbb{E}}}
\title[Sum-of-squares proofs and the quest toward optimal algorithms]{Sum-of-Squares Proofs and the\\ Quest toward Optimal Algorithms}
\author[Boaz Barak and David Steurer]{Boaz Barak and David Steurer}
\begin{document}

\begin{abstract}
In order to obtain the best-known guarantees, algorithms are traditionally tailored to the particular problem we want to solve.
Two recent developments, the \emph{Unique Games Conjecture (UGC)} and the \emph{Sum-of-Squares (SOS) method}, surprisingly suggest that this tailoring is not necessary and that a single efficient algorithm could achieve best possible guarantees for a wide range of different problems.

The \emph{Unique Games Conjecture (UGC)} is a tantalizing conjecture in computational complexity, which, if true, will shed light on the complexity of a great many problems.
In particular this conjecture predicts that a \emph{single concrete algorithm} provides optimal guarantees among all efficient algorithms for a large class of computational problems.

The \emph{Sum-of-Squares (SOS) method} is a general approach for solving systems of polynomial constraints.
This approach is studied in several scientific disciplines, including real algebraic geometry, proof complexity, control theory, and mathematical programming, and has found applications in fields as diverse as quantum information theory, formal verification, game theory and many others.

We survey some connections that were recently uncovered between the Unique Games Conjecture and the Sum-of-Squares method.
In particular, we discuss new tools to rigorously bound the running time of the SOS method for obtaining approximate solutions to hard optimization problems,
and how these tools give the potential for the sum-of-squares method to provide new guarantees for many problems of interest, and possibly to even refute the UGC.
\end{abstract}

\begin{classification}
Primary 68Q25; Secondary 90C22.
\end{classification}

\begin{keywords}
Sum of squares, semidefinite programming, unique games conjecture, small-set expansion
\end{keywords}


\maketitle

\section{Introduction}

A central mission of theoretical computer science is to understand which computational problems can be solved efficiently, which ones cannot,
and what it is about a problem that makes it easy or hard.
To illustrate these kind of questions,
let us consider the following parameters of an undirected $d$-regular graph\footnotemark\  $G=(V,E)$:

\footnotetext{
  An undirected $d$-regular graph $G=(V,E)$ consists of a set of \emph{vertices} $V$, which we sometimes identify with the set $[n]=\{1,\ldots,n\}$ for some integer $n$, and a set of \emph{edges} $E$, which are $2$-element subsets of $V$, such that
  every vertex is part of exactly $d$ edges.
  The assumption that $G$ is regular is not important and made chiefly for notational simplicity.
  For vertex sets $S,T \subseteq V$, we let $E(S,T)$ denote the set of edges $ \{s,t\}\in E$ with $ s\in S$ and $t\in T$.
}

\begin{itemize}

\item The \emph{smallest connected component of $G$}  is the size of the smallest non-empty set $S\subseteq V$ such that
$E(S,V\setminus S) = \emptyset$.

\item The \emph{independent-set number of $G$} is the size of the largest set $S\subseteq V$ such that $E(S,S) = \emptyset$.

\item The \emph{(edge) expansion\footnotemark\ of $G$}, denoted $\phi_G$, is the minimum \emph{expansion} $\phi_G(S)$ of a vertex set $S\subseteq V$ with size $1\le \lvert  S \rvert\le \lvert  V \rvert/2$, where
\begin{displaymath}
  \phi_G(S)= \frac{|E(S,V\setminus S)|}{d|S|}
  \,.
\end{displaymath}
The expansion $\phi_G(S)$ measures the probability that a step of the random walk on $G$ leaves $S$ conditioned on starting in $S$.
\end{itemize}

\footnotetext{
  The expansion of a graph is closely related to other quantities, known as \emph{isoperimetric constant}, \emph{conductance} or \emph{sparsest cut}.
  These quantities are not identical but are the same up to scaling and a multiplicative factor of at most $2$.
  Hence, they are computationally equivalent for our purposes.
  We also remark that expansion is often not normalized by the degree.
  However for our purposes this normalization is useful.
}

All these parameters capture different notions of well-connectedness of the graph $G$.
Computing these can be very useful in many of the settings in which we use graphs to model data, whether it is communication links between servers, social connections between people, genes that are co-expressed together, or transitions between states of a system.

The computational complexity of the first two parameters is fairly well understood.
The smallest connected component is easy to compute in time linear in the number $n=|V|$ of vertices  by using, for example, breadth-first search from every vertex in the graph.
The independent-set number is \ensuremath{\mathbf{NP}}-hard to compute, which means that, assuming the widely believed conjecture that $\ensuremath{\mathbf{P}}\neq\ensuremath{\mathbf{NP}}$, it cannot be computed in time polynomial in $n$.
In fact, under stronger (but still widely believed) quantitative versions of the $\ensuremath{\mathbf{P}}\neq \ensuremath{\mathbf{NP}}$ conjecture, for every $k$ it is infeasible to decide whether or not the maximum independent set is larger than $k$ in time $n^{o(k)}$~\cite{downey1995fixed,chen2006strong}
and hence we cannot significantly beat the trivial $O(n^k)$-time algorithm for this problem.
Similarly, while we can approximate the independent-set number trivially within a factor of $n$, assuming such conjectures, there is no polynomial-time algorithm to approximate it within a factor of $n^{1-\varepsilon(n)}$ where $\varepsilon(n)$ is some  function tending to zero as $n$ grows~\cite{Hastad96,Khot01}.

So, connectivity is an easy problem and independent set a hard one, but what about expansion? Here the situation is more complicated.
We know that we can't efficiently compute $\phi_G$ exactly, and we can't even get an arbitrarily good approximation~\cite{AmbuhlMS11}, but we actually do have efficient algorithms with non-trivial approximation guarantees for $\phi_G$.
Discrete versions of \emph{Cheeger's inequality}~\cite{cheeger1970lower,dodziuk1984difference,AlonM85,Alon86} yield such an estimate, namely
\begin{equation}
  \tfrac{d- \lambda_2}{2d} \leq \phi_G \leq 2\sqrt{\tfrac{d-\lambda_2}{2d}}
  \,,
  \label{eq:cheeger}
\end{equation}
where $\lambda_2(G)$ denotes the (efficiently computable) second largest eigenvalue of the $G$'s adjacency matrix.\footnote{The \emph{adjacency matrix} of a graph $G$ is the $|V|\times |V|$ matrix $A$ with $0/1$ entries such that $A_{u,v}=1$ iff $\{u,v\}\in E$.}
In particular, we can use (\ref{eq:cheeger}) to efficiently distinguish between graphs with $\phi_G$ close to $0$ and graphs with $\phi_G$ bounded away from $0$.
But can we do better?
For example, could we efficiently compute a quantity $c_G$ such that  $c_G\leq \phi_G \leq O(c_G^{0.51})$?
We simply don't know.\footnote{As we will mention later, there are algorithms to approximate $\phi_G$ up to factors depending on the number $n$ of vertices, which give better guarantees than (\ref{eq:cheeger}) for graphs where $\phi_G$ is sufficiently small as a function of $n$.}

\medskip

This is not an isolated example, but a pattern that keeps repeating.
Over the years, computer scientists have developed sophisticated tools to come up with algorithms on one hand, and hardness proofs showing the limits of efficient algorithms on the other hand.
But those two rarely match up.
Moreover, the cases where we do have tight hardness results are typically in settings, such as the independent set problem, where there is no way to significantly beat the trivial algorithm.
In contrast, {} for problems such as computing expansion, where we already know of an algorithm giving non-trivial guarantees, we typically have no proof that this algorithm is
\emph{optimal}.
In other words, the following is a common theme:

\begin{quote}
\emph{If you already know an algorithm with non-trivial approximation guarantees for a problem, it's very hard to rule out that cleverer algorithms couldn't get even better guarantees.}
\end{quote}

\noindent
In 2002, Subhash Khot formulated a conjecture, known as the \emph{Unique Games Conjecture (UGC)}~\cite{Khot02}.
A large body of follow up works has shown that this conjecture (whose description is deferred to Section~\ref{sec:UGC-SSEH} below) implies many hardness results that overcome the above challenge and match the best-known algorithms even in cases when they achieve non-trivial guarantees.
In fact, beyond just resolving particular questions, this line of works obtained far-reaching complementary \emph{meta algorithmic} and \emph{meta hardness} results.
By this we mean results that give an efficient \emph{meta algorithm} $\mathcal{A}$ (i.e., an algorithm that can be applied to a family of problems, and not just a single one) that is
\emph{optimal} within a broad domain $\mathcal{C}$, in the sense that (assuming the UGC) there is no polynomial-time algorithm that performs better than $\mathcal{A}$ on any problem in $\mathcal{C}$.
It is this aspect of the Unique Games Conjecture result that we find most exciting, and that shows promise of going beyond the current state where the individual algorithmic and hardness results form ``isolated islands of knowledge surrounded by a sea of ignorance''\footnote{{}Paraphrasing John Wheeler.} into a more unified theory of complexity.

The meta-algorithm that the UGC predicts to be optimal is based on \emph{semidefinite programming} and it uses this technique in a very particular and quite restricted way.
(In many settings, this meta-algorithm can be implemented in near-linear time~\cite{Steurer10}.)
We will refer to this algorithm as the \emph{UGC meta-algorithm}.
It can be viewed as a common generalization of several well known algorithms,
including those that underlie Cheeger's Inequality, Grothendieck's Inequality~\cite{grothendieck1953resume}, the Goemans--Williamson \ensuremath{\text{\textsc{Max Cut}}}\ algorithm~\cite{GoemansW95}, and the Lov\'{a}sz\ $\vartheta$ function~\cite{lovasz1979shannon}.
As we've seen for the example of Cheeger's Inequality, in many of those settings this meta-algorithm gives \emph{non-trivial approximation guarantees} which are the best known, but there are no hardness results ruling out the existence of better algorithms.
The works on the UGC has shown that this conjecture (and related ones) imply that this meta-algorithm is \emph{optimal} for a vast number of problems, including all those examples above.
For example, a beautiful result of Raghavendra~\cite{Raghavendra08} showed that for every constraint-satisfaction problem (a large class of problems
that includes many problems of interest such as \textsc{Max $k$-SAT}, \textsc{$k$-Coloring}, and \textsc{Max-Cut}),
the UGC meta-algorithm gives the best estimate on the maximum possible fraction of constraints one can satisfy.
Similarly, the UGC (or closely related variants) imply there are no efficient algorithms that give a better estimate for the sparsest cut of a graph than the one implied by Cheeger's Inequality~\cite{RaghavendraST12} and no better efficient estimate for the maximum correlation of a matrix with $\pm 1$-valued vectors than the one given by Grothendieck's Inequality.\footnote{See~\cite{RaghavendraS09groth} for the precise statement of Grothendieck's Inequality and this result.
Curiously, the UGC implies that Grothendieck's Inequality yields the best efficient approximation factor for the correlation of a matrix with $\pm 1$-valued vectors even though we don't actually know the numerical value of this factor (known as Grothendieck's constant).}
To summarize:

\begin{quote}
\emph{If true, the Unique Games Conjecture tells us not only \emph{which} problems in a large class are easy and which are hard, but also \emph{why} this is the case.
There is a \emph{single unifying reason}, captured by a concrete meta-algorithm, that explains all the easy problem in this class.
Moreover, in many cases where this meta-algorithm already gives non-trivial guarantees, the UGC implies that no further efficient improvements are possible.}
\end{quote}

All this means that the Unique Games Conjecture is certainly a very attractive proposition, but the big question still remains unanswered---is this conjecture actually true?
While some initial results supported the UGC, more recent works, although still falling short of disproving the conjecture, have called it into question.
In this survey we discuss the most promising current approach to refute the UGC, which is based on the \emph{Sum of Squares (SOS) method}~\cite{Shor87,Nesterov00,Parrilo00,Lasserre01}.
The SOS method could potentially refute the Unique Games Conjecture by beating the guarantees of the UGC meta-algorithm on problems on which the conjecture implies the latter's optimality.
This of course is interesting beyond the UGC, as it means we would be able to improve the known guarantees for many problems of interest.
Alas, analyzing the guarantees of the SOS method is a very challenging problem, and we still have relatively few tools to do so.
However, as we will see, we already know that at least in some contexts, the SOS method can yield better results than what was known before.
The SOS method is itself a meta algorithm, so even if it turns out to refute the UGC, this does not mean we need to give up on the notion of explaining the complexity
of wide swaths of problems via a single algorithm; we may just need to consider a different algorithm.
To summarize, regardless of whether it refutes the UGC or not, understanding the power of the SOS method  is an exciting research direction
that could advance us further towards the goal of a unified understanding of computational complexity.

\subsection{The UGC and SSEH conjectures}
\label{sec:UGC-SSEH}

Instead of the Unique Games Conjecture, in this survey we focus on a related conjecture known as the
\emph{Small-Set Expansion Hypothesis (SSEH)}~\cite{RaghavendraS10}.
The SSEH implies the UGC~\cite{RaghavendraS10}, and while there is no known implication in the other direction,
there are several results suggesting that these two conjectures are probably equivalent~\cite{RaghavendraS10,RaghavendraST10,RaghavendraS09,AroraBS10,BarakBHKSZ12}.
At any rate, most (though not all) of what we say in this survey applies equally well to both conjectures,
but the SSEH is, at least in our minds, a somewhat more natural and simpler-to-state conjecture.

Recall that for a $d$-regular graph $G=(V,E)$ and a vertex set $S\subseteq V$, we defined its expansion as $\phi_G(S)=|E(S,V\setminus S)|/(d|S|)$.
By Cheeger's inequality (\ref{eq:cheeger}), the second largest eigenvalue yields a non-trivial approximation for the minimum expansion $\phi_G = \min_{1\le|S|\leq |V|/2}\phi_G(S)$,
but it turns out that eigenvalues and similar methods do not work well for the problem of approximating the minimum expansion of smaller sets.
The Small-Set Expansion Hypothesis conjectures that this problem is inherently difficult.

\begin{conjecture}[Small-Set Expansion Hypothesis~\cite{RaghavendraS10}] For every $\varepsilon>0$ there exists $\delta>0$
such that given any graph $G=(V,E)$, it is $\ensuremath{\mathbf{NP}}$-hard to distinguish between the case \textbf{(i)} that there exists a subset $S\subseteq V$ with $|S|=\delta |V|$ such that $\phi_G(S) \leq \varepsilon$ and the case \textbf{(ii)} that
$\phi_G(S) \geq 1-\varepsilon$ for every $S$ with $|S| \leq \delta |V|$.
\end{conjecture}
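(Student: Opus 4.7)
The plan is to attempt the conjecture by a gap-amplifying reduction from a problem whose hardness is known \emph{unconditionally}, rather than from the UGC itself (since a reduction from UGC would only yield a conditional statement). The natural starting point is \textsc{Label Cover} at constant gap, available from the PCP theorem. Given such an instance, I would build a graph $G$ by placing a copy of a ``small-set expansion gadget''---the noisy hypercube on $\{0,1\}^R$ in which adjacent pairs differ in each coordinate independently with a small probability $\eta$---at every variable, and then gluing these clouds together along the constraints. The noisy hypercube is the canonical example of a graph containing small, barely-expanding sets (the sub-cubes corresponding to dictators), and every known hardness reduction in this area exploits precisely this feature.

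For the completeness case, a satisfying labeling of \textsc{Label Cover} selects a distinguished coordinate in each cloud, and the union of the corresponding dictator sub-cubes yields a set $S$ of density $\delta$ with $\phi_G(S)\le\varepsilon$; here $\delta$ is controlled by the alphabet size and $\varepsilon$ by the noise rate $\eta$. For soundness, I would aim to show that if there is \emph{any} set of density at most $\delta$ with expansion below $1-\varepsilon$, then it can be ``decoded'' into a labeling satisfying a non-trivial fraction of the \textsc{Label Cover} constraints, contradicting the soundness guarantee of the starting instance.

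The main obstacle is precisely this soundness step, because it is a statement about \emph{all} small sets rather than about a structured family. Standard Fourier-analytic machinery---based on hypercontractivity and the invariance principle, as in the Khot--Kindler--Mossel--O'Donnell framework---controls low-degree correlations and handles junta-like or dictator-like solutions, but does not a priori apply to an arbitrary subset of the cloud product of measure $\delta$. I would therefore expect to need a small-set analogue of Bourgain's noise-sensitivity theorem, strong enough to survive the cloud-gluing, that forces any non-expanding set to align with a bounded number of coordinates which can then be extracted as labels. Developing such a tool---plausibly in the spirit of the Grassmann-graph and short-code methods recently used in the resolution of the $2$-to-$2$ Games conjecture---appears to be the essential mathematical difficulty, and is the reason this hypothesis has remained open.
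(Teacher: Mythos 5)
There is no proof of this statement in the paper for your attempt to be compared against: the Small-Set Expansion Hypothesis is stated as a \emph{conjecture}, cited from Raghavendra and Steurer, and it remains open. The paper's role for this statement is expository---it records the conjecture, explains its relation to the UGC, and then spends much of its length discussing evidence that the conjecture might be \emph{false} (the subexponential $\exp(n^{\tau})$-time SOS algorithm obtained by combining Theorem~\ref{thm:SSEvsNorms} with Corollary~\ref{cor:dimbound}, and the fact that every candidate hard instance proposed so far is solved by constant-degree SOS). So before investing in a hardness reduction you should be aware that the authors' own framing treats refutation as at least as live a possibility as proof.

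As for the proposal itself: you have correctly identified the only known template (Label Cover composed with a noisy-hypercube gadget) and, to your credit, you have also correctly identified where it breaks---but that means the proposal is not a proof, and the gap is not a technicality. Two concrete points. First, a single dictator cut in the noisy hypercube has density $1/2$, not density $\delta$; to get genuinely small non-expanding sets in the completeness case you need subcubes of codimension $k$ (density $2^{-k}$, expansion roughly $k\eta$), and tuning $\delta$ and $\varepsilon$ independently, as the conjecture demands, already strains the gadget. Second, and fatally, the soundness step requires decoding an \emph{arbitrary} set of measure $\delta$ in the composed graph into a good Label Cover labeling. The Fourier-analytic decoding machinery (influence decoding via the invariance principle) applies to structured two-query dictatorship tests, not to arbitrary small subsets of a product of clouds; the noisy hypercube is itself a small-set expander, so any non-expanding small set must exploit the inter-cloud constraint edges in ways no current technique controls. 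Inventing the missing ``small-set Bourgain-type'' theorem is not a step in a proof---it \emph{is} the open problem, and even the Grassmann-graph/2-to-2 machinery you allude to stops short of the gap required here.
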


As mentioned above, the SSEH implies that (\ref{eq:cheeger}) yields an optimal approximation for $\phi_G$.
More formally, assuming the SSEH, there is some absolute constant $c>0$ such that for every $\phi \geq 0$, it is $\ensuremath{\mathbf{NP}}$-hard to distinguish between the case that a given graph $G$ satisfies $\phi_G \leq \phi$ and the case that  $\phi_G \geq c\sqrt{\phi}$~\cite{RaghavendraST12}.
Given that the SSEH conjectures the difficulty of approximating expansion, the reader might not be
so impressed that it also implies the optimality of Cheeger's Inequality.
However, we should note that the SSEH merely conjectures that the problem becomes harder as $\delta$ becomes smaller, without postulating
any quantitative relation between $\delta$  and $\varepsilon$,  and
so it is actually surprising (and requires a highly non-trivial proof) that it implies such quantitatively tight bounds.
Even more surprising is that (through its connection with the UGC) the SSEH implies tight hardness result for a host of other problems,
including every constraint satisfaction problem, Grothendieck's problem, and many others, which a priori seem to have nothing to do with graph expansion.

\begin{remark}
While we will stick to the SSEH in this survey, for completeness we present here the
definition of the Unique Games Conjecture.
We will not use this definition in the proceeding and so the reader can feel free to skip this remark.
The UGC can be thought of as a more structured variant of the SSEH where we restrict to graphs and sets that satisfy
some particular properties.
Because we restrict both the graphs and the sets, a priori it is not clear which of these conjectures should be stronger.  However it turns out that the SSEH implies the UGC~\cite{RaghavendraS10}.
It is an open problem whether the two conjectures are equivalent, though the authors personally suspect that this is the case.

We say that an $n$-vertex graph $G=(V,E)$ is \emph{$\delta$-structured} if there is a partition of $V$ into $\delta n$
sets $V_1,\ldots,V_{\delta n}$ each of size $1/\delta$, such that for every $i\neq j$, either $E(V_i,V_j)=\emptyset$ or $E(V_i,V_j)$ is a \emph{matching} (namely for every $u\in V_i$ there is exactly one $v\in V_j$ such that $\{u,v\}\in E$).  We say a set $S\subseteq V$ is \emph{$\delta$-structured} if $|S\cap V_i|=1$ for all $i$ (and so in particular, $|S|=\delta n$).
The Unique Games Conjecture states that for every $\varepsilon>0$ there exists a $\delta>0$ such that it is $\ensuremath{\mathbf{NP}}$ hard, given a $\delta$-structured $G$, to distinguish between the case \textbf{(i)} that there exists a $\delta$-structured $S$ such that $\phi_G(S) \leq \varepsilon$ and the case \textbf{(ii)} that every $\delta$-structured $S$ satisfies $\phi_G(S) \geq 1-\varepsilon$.
The conjecture can also be described in the form of so-called ``two prover one round games'' (hence its name); see Khot's surveys~\cite{Khot10a,Khot10b}.
\end{remark}

\medskip

\subsection{Organization of this survey and further reading}
In the rest of this survey we describe the Sum of squares algorithm, some of its applications, and its relation to the Unique Games and Small-Set Expansion Conjectures.
We start by defining the Sum of Squares algorithm, and how it relates to classical questions such as Hilbert $17^{th}$ problem.
We will demonstrate how the SOS algorithm is used, and its connection to the UGC/SSEH, by presenting Cheeger's Inequality (\ref{eq:cheeger}) as an instance of this algorithm.
The SSEH implies that the SOS algorithm cannot yield better estimates to $\phi_G$ than those obtained by (\ref{eq:cheeger}).
While we do not know yet whether this is true or false, we present two different applications where the SOS does beat prior works--- finding a planted sparse vector in a random subspace, and \emph{sparse coding}--- learning a set of vectors $A$ given samples of random sparse linear combinations
of vectors in $A$.
We then discuss some of the evidence for the UGC/SSEH, how this evidence is challenged by the SOS algorithm
and the relation between the UGC/SSEH and the problem of (approximately) finding sparse vectors in arbitrary (not necessarily random) subspaces.
Much of our discussion is based on the papers~\cite{AroraBS10,BarakGHMRS12,BarakBHKSZ12,BarakKS14,BarakKS15}.
See also~\cite{Barak12,Barak13,Barak14} for informal overviews of some of these issues.

For the reader interested in learning more about the Unique Games Conjecture, there are three excellent surveys on this topic.
Khot's CCC survey~\cite{Khot10b} gives a fairly comprehensive overview of the state of knowledge
on the UGC circa 2010, while  his ICM survey~\cite{Khot10a} focuses on some of the techniques and connections that arose in the works around the UGC.
Trevisan~\cite{Trevisan12} gives a wonderfully accessible introduction to the UGC, using the \textsc{Max-Cut} problem as a running example to
explain in detail the UGC's connection to semidefinite programming.
As a sign of how rapidly research in this area is progressing, this survey is almost entirely disjoint from~\cite{Khot10a,Khot10b,Trevisan12}.
While the former surveys mostly described the implications of the UGC for obtaining very strong hardness and ``meta hardness'' results,
the current manuscript is focused on the question of whether the UGC is actually true,
and more generally understanding the power of the SOS algorithm to go beyond the basic LP and SDP relaxations.

Our description of the SOS algorithm barely scratches the surface of this fascinating topic, which has a great many applications
that have nothing to do with the UGC or even approximation algorithms at large.
The volume~\cite{blekherman2013semidefinite} and the monograph~\cite{Laurent09} are good sources for some of these topics.
The SOS algorithm was developed in slightly different forms by several researchers, including Shor~\cite{Shor87}, Nesterov~\cite{Nesterov00}, Parrilo~\cite{Parrilo00}, and Lasserre~\cite{Lasserre01}.
It can be viewed as a strengthening of other ``meta-algorithms'' proposed by~\cite{SheraliA90,LovaszS91} (also known as linear and semi-definite programming hierarchies).\footnote{See~\cite{Laurent03} for a comparison.}
Our description of the SOS meta algorithm follows Parrilo's, while the description of the dual algorithm
follows Lasserre, although we use the pseudoexpectation notation introduced in~\cite{BarakBHKSZ12} instead of
Lasserre's notion of ``moment matrices''.
The Positivstellensatz/SOS proof system was first studied by Grigoriev and Vorobjov~\cite{GrigorievV01} and Grigoriev~\cite{Grigoriev01} proved some degree lower bounds for it, that were later rediscovered and expanded upon by~\cite{Schoenebeck08,Tulsiani09}.
All these are motivated by the works in real geometry related to Hilbert's $17^{\text{th}}$ problem; see Reznick's survey~\cite{reznick2000some} for more on this research area.
One difference between our focus here and much of the other literature on the SOS algorithm is that we are content with proving that
the algorithm supplies an \emph{approximation} to the true quantity, rather than exact convergence, but on the other hand are much more
stringent about using only very low degree (preferably constant or polylogarithmic in the number of variables).


\section{Sums of Squares Proofs and Algorithms}

One of the most common ways of proving that a quantity is non-negative is by expressing it as a \emph{Sum of Squares} (SOS).
For example, we can prove the Arithmetic-Mean Geometric-Mean inequality $ab \leq a^2/2 + b^2/2$ by the identity $a^2 + b^2 - 2ab = (a-b)^2$.
Thus a natural question, raised in the late $19^{\text{th}}$ century, was whether \emph{any} non-negative (possibly multivariate) polynomial can be written as a sum of squares of polynomials.
This was answered negatively by Hilbert in 1888, who went on to ask as his $17^{\text{th}}$ problem whether any such polynomial can be written as a sum of squares of \emph{rational}
functions.
A positive answer was given by Artin~\cite{Artin27}, and considerably strengthened by Krivine and Stengle.
In particular, the following theorem is a corollary of their results, which captures much of the general case.

\begin{theorem}[Corollary of the Positivstellensatz~\cite{Krivine64,Stengle74}]
  \label{thm:psatz}
  Let $P_1,\ldots,P_m\in \mathbb{R}[x]=\mathbb{R}[x_1,\ldots,x_n]$ be multivariate polynomials.
  Then, the system of polynomials equations $\mathcal{E} = \{P_1=0,\ldots,P_m=0\}$ has no solution over $\mathbb{R}^n$ if and only if, there exists polynomials $Q_1,\ldots,Q_m\in \mathbb{R}[x]$ such that $S\in\mathbb{R}[x]$ is a \emph{sum of squares} of polynomials and
  \begin{equation}
    -1 = S + \sum Q_i \cdot P_i  \;. \label{eq:Psatz}
  \end{equation}
\end{theorem}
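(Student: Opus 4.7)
The plan is to prove the two directions of the biconditional separately. The ``if'' direction is a one-line substitution; the ``only if'' direction is the real content of the theorem and requires the machinery of real closed fields.

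For the easy direction, suppose toward contradiction that both (\ref{eq:Psatz}) holds and that $x^\ast\in\mathbb{R}^n$ satisfies $P_i(x^\ast)=0$ for every $i$. Substituting into (\ref{eq:Psatz}) gives $-1 = S(x^\ast) + \sum_i Q_i(x^\ast)\cdot 0 = S(x^\ast)$, but $S(x^\ast)\geq 0$ since $S$ is a sum of squares of real polynomials. This contradiction shows that if the identity (\ref{eq:Psatz}) holds then $\mathcal{E}$ has no solution over $\mathbb{R}^n$.

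For the hard direction I would argue contrapositively: assuming no identity of the form (\ref{eq:Psatz}) exists, I would produce a common real zero of the $P_i$. Consider the set $C = \{S + \sum_i Q_i P_i : S \text{ is SOS},\ Q_i\in\mathbb{R}[x]\}$. This is a preordering of $\mathbb{R}[x]$ (closed under addition and multiplication, containing every square) that contains the ideal $I = (P_1,\ldots,P_m)$, and by hypothesis $-1\notin C$. Using Zorn's lemma, I would enlarge $C$ to a preordering $C^\ast$ that is maximal subject to not containing $-1$. A standard argument from Artin--Schreier theory then shows that $C^\ast\cup(-C^\ast)=\mathbb{R}[x]$, that $J := C^\ast\cap(-C^\ast)$ is a prime ideal, and that the image of $C^\ast$ induces a total order on the integral domain $A := \mathbb{R}[x]/J$; this order extends to the fraction field $K$ of $A$. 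Passing to a real closure $\bar K$ of $K$, the images $\bar x_1,\ldots,\bar x_n$ of the coordinate variables form a common zero of $P_1,\ldots,P_m$ in $\bar K^n$, because $I\subseteq J$. Finally, I would invoke the Tarski--Seidenberg transfer principle: the existence of a common zero of the $P_i$ is a first-order sentence in the language of ordered rings, and since the theory of real closed fields is complete, this sentence holds in $\bar K$ if and only if it holds in $\mathbb{R}$. Hence the $P_i$ have a common zero in $\mathbb{R}^n$, as desired.

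The main obstacle is the structural part of the hard direction: verifying that a maximal preordering not containing $-1$ has prime support and induces a total order on its quotient, and correctly applying the Tarski--Seidenberg transfer. These are classical but nontrivial results from real algebraic geometry, and a genuinely self-contained treatment would require developing background on formally real fields, real closures, and quantifier elimination for the theory of real closed fields. A cleaner exposition would instead invoke the full Krivine--Stengle Positivstellensatz as a black box and simply observe that the stated theorem is its ``feasibility'' special case, obtained by taking the list of inequality constraints to be empty.
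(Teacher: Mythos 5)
Your proposal is correct in outline, but note that the paper does not actually prove this theorem: it is stated as a corollary of the Krivine--Stengle Positivstellensatz and cited as such, which is exactly the ``cleaner exposition'' you describe in your final sentence. So the honest comparison is between your from-scratch sketch of the classical argument and the paper's decision to take the result as a black box; the only proof the paper supplies in this vicinity is Lemma~\ref{lem:hypercube-pss}, which handles the special case where the variety is the hypercube $\{\pm 1\}^n$ by a completely elementary interpolation argument (writing $P_0^2-1-R^2$ as an element of the ideal generated by the $x_i^2-1$), with no real algebraic geometry at all. Your sketch of the hard direction is the standard one and its skeleton is sound: the set $C=\Sigma+I$ (sums of squares plus the ideal) is indeed a preordering because $I$ is an ideal, a Zorn-maximal proper preordering is an ordering with prime support, and $I\subseteq J$ forces the coordinate images to be a common zero in $\bar K^n$. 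Two small points deserve care if you were to write this out: first, the order on $\bar K$ restricts to the standard order on $\mathbb{R}$ (this is automatic since $\mathbb{R}$ admits a unique ordering, but it must be said for the transfer to apply); second, the sentence asserting a common zero has parameters in $\mathbb{R}$, so what you need is \emph{model completeness} of the theory of real closed fields (equivalently, that $\mathbb{R}\preceq\bar K$), not merely completeness of the parameter-free theory --- Tarski--Seidenberg quantifier elimination does deliver this, so the invocation is correct once phrased that way.
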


We say that the polynomials $S,Q_1,\ldots,Q_m$ in the conclusion of the theorem form an \emph{SOS proof} refuting the system of polynomial equations\footnotemark\ $\mathcal{E}$.
\footnotetext{
In this survey we restrict attention to polynomial \emph{equalities} as opposed to \emph{inequalities},
which turns out to be without loss of generality for our purposes.
If we have a system of polynomial inequalities $\{P_1\ge 0,\ldots,P_m\ge 0\}$ for $P_i\in \mathbb{R}[x]$, the Positivstellensatz certificates of infeasibility take the form $-1=\sum_{\alpha \subseteq [n]} Q_\alpha P_\alpha$, where each $Q_\alpha\in\mathbb{R}[X]$ is a sum of squares and $P_\alpha=\prod_{i\in \alpha} P_i$.
However, we can transform inequalities $\{P_i\ge 0\}$ to equivalent equalities  $\{P_i'=P_i-y_i^2=0\}$, where $y_1,\ldots,y_m$ are fresh variables.
This transformation makes it only easier to find certificates, because $\sum_{\alpha\subseteq [n]} Q_\alpha P_\alpha = S' + \sum_i Q'_i P'_i$ for $S'=\sum_{\alpha\subseteq [n]} Q_\alpha y_\alpha^2$, where $y_\alpha=\prod_{i\in \alpha} y_i$.
It also follows that the transformation can only reduce the degree of SOS refutations.}
Clearly the existence of such polynomials implies that $\mathcal{E}$ is unsatisfiable---the interesting part of Theorem~\ref{thm:psatz} is the other direction.
We say that a SOS refutation $S_1,Q_1,...,Q_m$ has \emph{degree} $\ell$ if the maximum degree of the polynomials $Q_iP_i$ involved in the proof is at most $\ell$~\cite{GrigorievV01}.
By writing down the coefficients of these polynomials, we see that a degree-$\ell$ SOS proof can be written using $mn^{O(\ell)}$ numbers.\footnote{It  can be shown that the decomposition of $S$ into sums of squares will not require more than $n^{\ell}$ terms; also in all the
settings we consider, there are no issues of accuracy in representing real numbers, and so a degree $\ell$-proof can be written down
using $mn^{O(\ell)}$ bits.}

In the following lemma, we will prove a special case of Theorem~\ref{thm:psatz}, where the solution set of $\mathcal{E}$ is a subset of the hypercube $\{\pm 1\}^n$.
Here, the degree of SOS refutations is bounded by $2n$.
(This bound is not meaningful computationally because the size of degree-$\Omega(n)$ refutations is comparable to the number of points in $\{\pm 1\}^n$.)

\begin{lemma}
  \label{lem:hypercube-pss}
  Let $\mathcal{E}=\{P_0=0,x_1^2-1=0,\ldots,x_n^2-1=0\}$ for some $P_0\in\mathbb{R}[x]$.
  Then, either the system $\mathcal{E}$ is satisfiable or it has a degree-$2n$ SOS refutation.
\end{lemma}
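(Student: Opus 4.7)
The plan is to observe that any solution of $\mathcal{E}$ over $\mathbb{R}^n$ must lie in $\{-1,1\}^n$, since the axioms $x_i^2 = 1$ force $x_i \in \{-1,1\}$. Hence $\mathcal{E}$ is infeasible exactly when $P_0$ has no root on the Boolean cube, and I will construct a refutation with $S = 0$, of the form $-1 = Q_0 P_0 + \sum_{i=1}^n R_i(x_i^2-1)$, by Lagrange interpolation on $\{-1,1\}^n$.

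First, since adding a multiple of $x_i^2 - 1$ to $P_0$ only changes the $R_i$'s in the refutation without affecting the overall degree, I may reduce $P_0$ modulo $\langle x_1^2-1,\ldots, x_n^2-1\rangle$ and assume without loss of generality that $P_0$ is multilinear of degree at most $n$; it still has no zero on $\{-1,1\}^n$. Next, for each $a \in \{-1,1\}^n$ let $\chi_a(x) = \prod_{i=1}^n \tfrac{1 + a_i x_i}{2}$ be the standard multilinear indicator (degree $n$, value $1$ at $a$ and $0$ at every other point of the cube), and set
\[
  Q_0(x) \;:=\; -\sum_{a \in \{-1,1\}^n} \frac{\chi_a(x)}{P_0(a)}\,.
\]
By construction $Q_0$ is multilinear of degree at most $n$ and $Q_0(a) = -1/P_0(a)$ for every $a \in \{-1,1\}^n$, so $F(x) := 1 + Q_0(x) P_0(x)$ has degree at most $2n$ and vanishes identically on $\{-1,1\}^n$.

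To finish I will invoke the standard fact that the vanishing ideal of $\{-1,1\}^n$ in $\mathbb{R}[x]$ equals $\langle x_1^2-1,\ldots, x_n^2-1\rangle$, in the quantitative form: any polynomial $F$ vanishing on the cube admits a decomposition $F = \sum_{i=1}^n R_i(x_i^2-1)$ in which each summand $R_i(x_i^2-1)$ has total degree at most $\deg F \le 2n$. This follows from a greedy reduction: each monomial containing $x_i^{e_i}$ with $e_i \ge 2$ can be rewritten, using $x_i^{e_i} = x_i^{e_i-2} + x_i^{e_i-2}(x_i^2-1)$, so as to lower $x_i$-degree without increasing total degree; iterating in all variables produces a multilinear remainder, which still vanishes on the cube and hence must be identically zero. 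Rearranging the resulting identity yields
\[
  -1 \;=\; Q_0\,P_0 + \sum_{i=1}^n (-R_i)(x_i^2-1)\,,
\]
a degree-$2n$ SOS refutation with trivial sum-of-squares part $S = 0$.

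The one substantive step, and the only one I would grind through carefully, is the degree-preserving polynomial-reduction argument that underlies the vanishing-ideal claim and gives $\deg R_i \le 2n - 2$; the construction of $Q_0$ is just Lagrange interpolation on the Boolean cube (equivalently, an instance of the Fourier expansion in the $\{-1,1\}$ basis).
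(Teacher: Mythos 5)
Your proof is correct, but it takes a genuinely different route from the paper's. The paper first normalizes so that $P_0^2 \ge 1$ on $\{\pm 1\}^n$, multilinearly interpolates $R=\sqrt{P_0^2-1}$, and produces the certificate $-1 = R^2 + (-P_0)\cdot P_0 + \sum_i Q_i(x_i^2-1)$, in which the sum-of-squares part $R^2$ is genuinely used. You instead interpolate the pointwise inverse $-1/P_0$ on the cube to build $Q_0$ with $1+Q_0P_0$ vanishing on $\{\pm 1\}^n$, then push that into the ideal $\langle x_1^2-1,\ldots,x_n^2-1\rangle$; your certificate has $S=0$, so it is really a degree-$2n$ \emph{Nullstellensatz} certificate, which the paper's definition accepts since $0$ is (trivially) a sum of squares. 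Your version is more elementary --- no square root and no rescaling of $P_0$ --- and it cleanly isolates the fact that over a finite set ``nonvanishing'' means ``invertible''. What the paper's square-root construction buys is that it generalizes to the inequality setting (certifying $P_0\ge c$ on the cube, where inverse interpolation is useless), which is the form actually needed when the lemma is invoked to argue $\varphi^{(2n)}=\min_{x\in\{\pm1\}^n}P_0(x)$. One shared caveat: both proofs need the reduction to multilinear $P_0$ for the degree-$2n$ bound, since otherwise the term $Q_0P_0$ (respectively $(-P_0)P_0$) already exceeds degree $2n$; your parenthetical claim that re-expanding the reduced $P_0$ ``does not affect the overall degree'' is only accurate if the degree is measured against the reduced system, but the paper makes exactly the same implicit assumption, so this is not a point against your argument.
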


\begin{proof}
Suppose the system is not satisfiable, which means that $P_0(x)\neq 0$ for all $x\in\{\pm 1\}^n$.
Since $\{\pm 1\}^n$ is a finite set, we may assume $P_0^2 \ge 1$ over $\{\pm 1\}^n$.
Now interpolate the real-valued function $\sqrt{P_0^2-1}$ on $\{\pm1\}^{{}n}$ as a multilinear (and hence degree at most $n$) polynomial in $R\in\mathbb{R}[x]$.
Then, $P_0^2-1-R^2$ is a polynomial of degree at most $2n$ that vanishes over $\{\pm 1\}^n$. 
(Since we can replace $x_i^2$ by $1$ in any monomial, we can assume without loss of generality that $P_0$ is multilinear and hence has degree at most $n$.)
This means that we can write $P_0^2-1-R^2$  in the form $\sum_{i=1}^n Q_i \cdot (x_i^2-1)$ for polynomials $Q_i$ with $Q_i\le \deg 2n-2$.
(This fact can be verified either directly or by using that $x_1^2-1,\ldots,x_n^2-1$ is a Gr\"obner basis for $\{\pm 1\}^n$.)
Putting things together, we see that $-1=R^2+ (-P_0)\cdot P_0+\sum_{i=1}^n Q_i \cdot (x_i^2-1)$, which is a SOS refutation for $\mathcal{E}$ of the form in Theorem~\ref{thm:psatz}.
\end{proof}

\subsection{From proofs to algorithms}
\label{sec:SOS-alg}
The Sum of Squares algorithm is based on the following theorem, which was discovered in different forms by several researchers:

\begin{theorem}[SOS Theorem~\cite{Shor87,Nesterov00,Parrilo00,Lasserre01}, informally stated]
  \label{thm:SOS}
  If there is a degree-$\ell$ SOS proof refuting $\mathcal{E}=\{P_1=0,\ldots,P_m=0\}$, then such a proof can be found
  in $mn^{O(\ell)}$ time.
\end{theorem}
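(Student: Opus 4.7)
The plan is to reduce the search for a degree-$\ell$ SOS refutation of $\mathcal{E}$ to solving a semidefinite program of size $mn^{O(\ell)}$, and then invoke the fact that such an SDP can be solved in time polynomial in its size (e.g., by the ellipsoid method or interior-point methods). Everything hinges on the classical characterization of sum-of-squares polynomials via positive semidefinite matrices.

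First I would fix the ansatz: I am looking for polynomials $Q_1,\ldots,Q_m\in\mathbb{R}[x]$ with $\deg(Q_iP_i)\le \ell$ and a polynomial $S\in \mathbb{R}[x]$ of degree at most $\ell$ which is a sum of squares, such that $-1=S+\sum_i Q_iP_i$ as a polynomial identity. Each $Q_i$ has at most $n^{O(\ell)}$ monomials, so its unknown coefficients form a vector of that length; likewise $S$ has $n^{O(\ell)}$ unknown coefficients. The polynomial identity $-1=S+\sum_i Q_iP_i$ imposes one linear equation per monomial (at most $n^{O(\ell)}$ of them) in these unknown coefficients. So far the constraints are linear in $mn^{O(\ell)}$ unknowns.

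The second step, which is the heart of the reduction, is to encode the condition ``$S$ is a sum of squares'' as a positive semidefinite constraint. Let $v(x)$ denote the vector of all monomials in $x_1,\ldots,x_n$ of degree at most $\ell/2$; its length is $N=n^{O(\ell)}$. The key fact is that a polynomial $S$ of degree at most $\ell$ is a sum of squares if and only if there exists an $N\times N$ positive semidefinite matrix $M\succeq 0$ with $S(x)=\langle v(x),Mv(x)\rangle$. The ``if'' direction follows by writing $M=\sum_j u_ju_j^\top$ and noting then $S=\sum_j \langle u_j,v(x)\rangle^2$; the ``only if'' direction is immediate by taking $M$ to be the Gram matrix of the squared terms. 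The identity $S(x)=\langle v(x),Mv(x)\rangle$ in turn translates to linear equations on the entries of $M$: for each monomial $x^\gamma$ of degree at most $\ell$, the coefficient of $x^\gamma$ in $S$ equals $\sum_{\alpha+\beta=\gamma} M_{\alpha,\beta}$.

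Combining the two steps, the search for a degree-$\ell$ SOS refutation becomes a feasibility problem for a semidefinite program with $mn^{O(\ell)}$ variables and $n^{O(\ell)}$ linear constraints, together with the single PSD constraint $M\succeq 0$. By standard results on semidefinite programming (ellipsoid method with a separation oracle for $M\succeq 0$ provided by eigendecomposition, or interior-point methods), such an SDP can be solved in time polynomial in its description, giving the claimed $mn^{O(\ell)}$ bound. The main subtlety I would flag, and which the ``informally stated'' qualifier in the theorem is designed to gloss over, is the bit-complexity and numerical accuracy of the SDP solution: one must argue either that the feasible region has an interior point of polynomial bit-complexity (so that exact rational solutions can be recovered), or be content with an $\varepsilon$-approximate refutation and pay an additional $\log(1/\varepsilon)$ factor in the running time. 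For the purposes of the applications in this survey, this approximation is harmless, so I would state the theorem with this caveat and defer the detailed accuracy analysis to the references.
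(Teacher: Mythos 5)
Your proposal is correct and follows essentially the same route as the paper's proof sketch: view the refutation's coefficients as variables, observe that the identity $-1=S+\sum_i Q_iP_i$ gives linear constraints, encode ``$S$ is a sum of squares'' as a positive-semidefiniteness constraint on a Gram matrix indexed by monomials of degree at most $\ell/2$, and solve the resulting semidefinite program. The extra detail you supply (the proof of the SOS--PSD equivalence and the bit-complexity caveat) matches what the paper defers to a footnote.
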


\begin{proof}[Proof sketch]
We can view a degree-$\ell$ SOS refutation $-1=S+\sum_i Q_iP_i$ for $\mathcal{E}$ as a system of linear equations in $mn^{O(\ell)}$ variables corresponding to the coefficients of the unknown polynomials $S,Q_1,\ldots,Q_m$.
We only need to incorporate the non-linear constraint that $S$ is a sum of squares.
But it is not hard to see that a degree-$\ell$ polynomial $S$ is a sum of squares if and only if there exists a positive-semidefinite matrix $M$ such that $S=\sum_{\alpha,\alpha'} M_{\alpha,\alpha'} x^\alpha x^{\alpha'}$, where $\alpha$ and $\alpha'$ range over all monomials $x^\alpha$ and $x^{\alpha'}$ of degree at most $\ell/2$.
Thus, the task of finding a degree-$\ell$ SOS refutation reduces to the task of solving linear systems of equations with the  additional constraint that matrix formed by some of the variables is positive-semidefinite.
\emph{Semidefinite programming} solves precisely this task and is computationally efficient.\footnote{
In this survey we ignore issues of numerical accuracy which turn out to be easily handled in our setting.}
\end{proof}

\begin{remark}[\emph{What does ``efficient'' mean?}]
In the applications we are interested in, the number of variables $n$ corresponds to our ``input size''.
The equation systems $\mathcal{E}$ we consider can always be solved via a ``brute force'' algorithm running in $\exp(O(n))$ time,
and so degree-$\ell$ SOS proofs become interesting when  $\ell$ is much smaller than $n$.
Ideally we would want $\ell = O(1)$, though $\ell=\mathrm{polylog}(n)$ or even, say, $\ell = \sqrt{n}$, is still interesting.
\end{remark}

Theorem~\ref{thm:SOS} yields the following \emph{meta algorithm} that can be applied on any problem of the form
\begin{equation}
  \min_{x \in \mathbb{R}^n\colon  P_1(x)=\dots=P_m(x)=0} P_0(x) \label{eq:poly-opt}
\end{equation}
where $P_0,P_1,\ldots,P_m\in \mathbb{R}[x]$ are polynomials.
The algorithm is parameterized by a number $\ell$ called its \emph{degree} and operates as follows:

\begin{center}
\fbox{\begin{minipage}{\textwidth}
\begin{center}\textbf{The degree-$\ell$ Sum-of-Squares Algorithm}  \end{center}
\noindent\textbf{Input:} Polynomials $P_0,\ldots,P_m\in\mathbb{R}[x]$ \\
\noindent\textbf{Goal:} Estimate $\min P_0(x)$ over all $x\in\mathbb{R}^n$ such that $ P_1(x)=\ldots=P_m(x)=0$ \\
\noindent\textbf{Operation:}  Output the smallest  value $\varphi^{(\ell)}$
such that there does \emph{not} exist a \mbox{degree-$\ell$} SOS proof refuting the system,
\begin{displaymath}
\{ P_0= \varphi^{(\ell)},P_1=0,\ldots, P_m(x) =0 \}\,. \footnotemark
\end{displaymath}
\end{minipage}}
\end{center}

\footnotetext{As in other cases, we are ignoring here issues of numerical accuracy.
Also, we note that when actually executing this algorithm, we will not need to check all the
(uncountably many) values $\varphi^{(\ell)}\in \mathbb{R}$, but it suffices
to enumerate over a sufficiently fine discretization of the interval $[-M,+M]$ for some number $M$ depending on the polynomials
$P_0,\ldots,P_m$.
This step can be carried out in polynomial time in all the settings we consider.}

We call $\varphi^{(\ell)}$ the \emph{degree-$\ell$ SOS estimate} for (\ref{eq:poly-opt}),
and by Theorem~\ref{thm:SOS} it can be computed in $n^{O(\ell)}$ time.
For the actual minimum value $\varphi$ of (\ref{eq:poly-opt}), the corresponding system of equations $\{ P_0 = \varphi, P_1=0,\ldots,P_m=0 \}$ is satisfiable, and hence in particular cannot be refuted by an SOS proof.
Thus, $\varphi^{(\ell)} \leq \varphi$ for any $\ell$.
Since higher degree proofs are more powerful (in the sense that they can refute more equations), it holds that
\[
\varphi^{(2)} \leq \varphi^{(4)} \leq  \varphi^{(6)} \leq \cdots \leq \min_{x\in\mathbb{R}^n\colon P_1(x)=\dots=P_m(x)=0} P_0(x)  \;.
\]
(We can assume degrees of SOS proofs to be even.)
As we've seen in Lemma~\ref{lem:hypercube-pss}, for the typical domains we are interested in Computer Science, such as when the set of solutions of $\{P_1=0,\ldots,P_m=0\}$ is equal to $\{ \pm1 \}^n$, this sequence is finite in the sense that $\varphi^{(2n)} = \min_{x \in \{\pm 1\}^n} P_0(x)$.

The SOS algorithm uses semidefinite programming in a much more general way than many previous algorithms such as~\cite{lovasz1979shannon,GoemansW95}.
In fact, the UGC meta-algorithm is the same as the base case (i.e., $\ell=2$) of the SOS algorithm.



Recall that the UGC and SSEH imply that in many settings, one cannot improve on the approximation guarantees of the UGC meta-algorithm without using $\exp(n^{\Omega(1)})$ time.
Thus in particular, if those conjectures are true then in those settings, using the SOS meta algorithm with degree, say, $\ell=10$ (or even $\ell=\mathrm{polylog}(n)$ or $\ell = n^{o(1)}$) will not yield significantly better guarantees than $\ell=2$.

\begin{remark}[Comparison with local-search based algorithms]
Another approach to optimize over non-linear problems such as (\ref{eq:poly-opt}) is to use local-search
algorithms such as gradient descent that make local improvement steps, e.g., in the direction of the gradient, until a local optimum is reached.
One difference between such local search algorithms and the SOS algorithm is that the latter sometimes succeeds in optimizing highly non-convex problems that have exponential number of local optima.
As an illustration, consider the polynomial $P(x) = n^4\sum_{i=1}^n (x_i^2-x_i)^2 + (\sum_{i=1}^n x_i)^2$.
  
Its unique global minimum is the point $x=0$, but it is not hard to see that it has
an exponential number of local minima (for every $x\in \{0,1\}^n$, $P(x) < P(y)$ for every $y$ with
$\|y-x\| \in [1/n,2/n]$, and so there must be a local minima in the ball of radius $1/n$ around $x$).
Hence, gradient descent or other such algorithms are extremely likely to get stuck in one of these suboptimal local minima.
However, since $P$ is in fact a sum of squares with constant term $0$, the degree-$4$ SOS algorithm will output $P$'s correct global minimum value.

\end{remark}

\subsection{Pseudodistributions and pseudoexpectations}
\label{sec:pseudo-dist}

Suppose we want to show that the level-$\ell$ SOS meta-algorithm achieves a good approximation of the minimum value of $P_0$ over the set $\mathcal{Z}=\{x\in\mathbb{R}^n \mid P_1(x)=\dots=P_m(x)=0\}$ for a particular kind of polynomials $P_0,P_1,\ldots,P_m\in\mathbb{R}[x]$.
Since the estimate $\varphi^{(\ell)}$ always lower bounds this quantity, we are to show that
\begin{equation}
  \min_{\mathcal{Z}} P_0 \leq f(\varphi^{(\ell)}) \label{eq:meta-works}
\end{equation}
for some particular function $f$ (satisfying $f(\varphi)\geq \varphi$) which captures our approximation guarantee.
(E.g., a  factor $c$ approximation corresponds to the function $f(\varphi) = c\varphi$.)

If we expand out the definition of $\varphi^{(\ell)}$, we see that to prove Equation~(\ref{eq:meta-works}) we need to show that for every $\varphi$ if there does not exists a degree-$\ell$ proof that $P_0(x) \neq \varphi$ for all $x\in\mathcal{Z}$, then there exists an $x\in \mathcal{Z}$ such that $P_0(x) \leq f(\varphi)$.
So, to prove a result of this form, we need to find ways to use the \emph{non-existence} of a proof.
Here, \emph{duality} is useful.
\begin{quote}
\emph{Pseudodistributions are the dual object to SOS refutations, and hence the \emph{non-existence} of a refutation implies the \emph{existence} of a pseudodistribution.}
\end{quote}

We now elaborate on this, and explain both the definition and intuition behind pseudodistributions.
In Section~\ref{sec:sos-cheeger} we will give a concrete example, by showing how one can prove that degree-$2$ SOS proofs
capture Cheeger's Inequality using such an argument.
Results such as the analysis of the Goemans-Williamson \ensuremath{\text{\textsc{Max Cut}}}\ algorithm~\cite{GoemansW95}, and the proof of Grothendieck's Inequality~\cite{grothendieck1953resume} can be derived using similar methods.

\begin{definition}
Let $\mathbb{R}[x]_{\ell}$ denote the set of polynomials in $\mathbb{R}[x]$
of degree at most $\ell$.
  A \emph{degree-$\ell$ pseudoexpectation operator for $\mathbb{R}[x]$} is a linear operator $\mathcal{L}$ that maps
polynomials in $\mathbb{R}[x]_{\ell}$ into $\mathbb{R}$ and satisfies that $\mathcal{L}(1) = 1$ and $\mathcal{L}(P^2) \geq 0$ for every polynomial $P$ of degree at most $\ell/2$.

\end{definition}

The term pseudoexpectation stems from the fact that for every distribution $\mathcal{D}$ over $\mathbb{R}^n$, we can obtain such an operator by choosing $\mathcal{L}(P)=\bbE_{\mathcal{D}} P$ for all $P\in\mathbb{R}[x]$.
Moreover, the properties $\mathcal{L}(1)=1$ and $\mathcal{L}(P^2)\ge 0$ turn out to capture to a surprising extent the properties of distributions and their expectations that we tend to use in proofs.
Therefore, we will use a notation and terminology for such pseudoexpectation operators that parallels the notation we use for distributions.
In fact, all of our notation can be understood by making the thought experiment that there exists a distribution as above and expressing all quantities in terms of low-degree moments of that distribution (so that they also make sense if we only have a pseudoexpectation operator that doesn't necessarily correspond to a distribution).

In the following, we present the formal definition of our notation.
We denote pseudoexpectation operators as $\pE_\mathcal{D}$, where $\mathcal{D}$ acts as index to distinguish different operators.
If $\pE_\mathcal{D}$ is a degree-$\ell$ pseudoexpectation operator for $\mathbb{R}[x]$, we say that $\mathcal{D}$ is a \emph{degree-$\ell$ pseudodistribution} for the indeterminates $x$.
In order to emphasize or change indeterminates, we use the notation $\pE_{y\sim \mathcal{D}} P(y)$.
In case we have only one pseudodistribution $\mathcal{D}$ for indeterminates $x$, we denote it by $\{x\}$.
In that case, we also often drop the subscript for the pseudoexpectation and write $\pE P$ for $\pE_{\{x\}}P$.

We say that a degree-$\ell$ pseudodistribution $\{x\}$ satisfies a system of polynomial equations $\{P_1=0,\ldots,P_m=0\}$ if $\pE Q\cdot P_i=0$ for all $i\in[m]$ and all polynomials $Q\in\mathbb{R}[x]$ with $\deg Q \cdot P_i\le \ell$.
We also say that $\{ x \}$ satisfies the constraint $\{ P(x) \ge  0 \}$ if there exists some sum-of-squares polynomial $S\in\mathbb{R}[x]$ such that $\{ x \}$ satisfies the polynomial equation $\{ P=S \}$.
It is not hard to see that if $\{ x \}$ was an actual distribution, then these definitions imply that all points in the support of the distribution satisfy the constraints.
We write $P\succcurlyeq 0$ to denote that $P$ is a sum of squares of polynomials, and similarly we write $P \succcurlyeq Q$ to denote $P-Q \succcurlyeq 0$.

The duality  between SOS proofs and pseudoexpectations is expressed in the following theorem.
We say that a system $\mathcal{E}$ of polynomial equations is \emph{explicitly bounded} if there exists a linear combination of the constraints in $\mathcal{E}$ that has the form $\{\sum_i x_i^2 + S= M\}$ for $M\in\mathbb{R}$ and $S\in\mathbb{R}[x]$ a sum-of-squares polynomial.
(Note that in this case, every solution $x\in\mathbb{R}^n$ of the system $\mathcal{E}$ satisfies $\sum_i x_i^2\le M$.)

\begin{theorem}
  \label{thm:duality}
  Let $\mathcal{E}=\{P_1=0,\ldots,P_m=0\}$ be a set of polynomial equations with $P_i\in\mathbb{R}[x]$.
  Assume that $\mathcal{E}$ is explicitly bounded in the sense above.
  Then, exactly one of the following two statements holds: (a) there exists a degree-$\ell$ SOS proof refuting~$\mathcal{E}$, or (b) there exists a degree-$\ell$ pseudodistribution $\{x\}$ that satisfies~$\mathcal{E}$.
\end{theorem}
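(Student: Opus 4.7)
The plan is to prove Theorem~\ref{thm:duality} via finite-dimensional convex duality between the degree-$\ell$ SOS cone and pseudoexpectations.

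First I would handle the easy direction, that (a) and (b) cannot both hold. If $\{x\}$ is a degree-$\ell$ pseudodistribution satisfying $\mathcal{E}$ and $-1 = S + \sum_i Q_i P_i$ is a degree-$\ell$ SOS refutation, then applying the pseudoexpectation to both sides gives $-1 = \pE S + \sum_i \pE Q_i P_i \geq 0$, since $\pE S \geq 0$ (as $S$ is a sum of squares of polynomials of degree at most $\ell/2$) and each $\pE Q_i P_i = 0$ (as $\{x\}$ satisfies $P_i=0$ and $\deg Q_i P_i \leq \ell$). This contradiction rules out simultaneous occurrence.

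For the substantive direction, I would cast matters as a separation problem in the finite-dimensional real vector space $V = \mathbb{R}[x]_{\leq \ell}$. Let $K \subseteq V$ be the convex cone of polynomials expressible as $S + \sum_i Q_i P_i$, where $S \in V$ is a sum of squares and each $Q_i \in \mathbb{R}[x]$ satisfies $\deg(Q_i P_i) \leq \ell$. By definition, alternative (a) is equivalent to $-1 \in K$. Assuming (a) fails and taking for granted that $K$ is closed (discussed below), the finite-dimensional strict separating hyperplane theorem yields a linear functional $\mathcal{L}\colon V \to \mathbb{R}$ with $\mathcal{L}(-1) < c \leq \mathcal{L}(p)$ for all $p \in K$, for some scalar $c$. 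Since $0 \in K$ forces $c \leq 0$, and since $\lambda p \in K$ for any $p \in K$ and $\lambda > 0$, letting $\lambda \to \infty$ forces $\mathcal{L}(p) \geq 0$ on $K$. As $\pm Q_i P_i \in K$ for every admissible $Q_i$, this yields $\mathcal{L}(Q_i P_i) = 0$, so $\mathcal{L}$ satisfies $\mathcal{E}$ in the pseudodistribution sense. Since every sum of squares in $V$ lies in $K$, $\mathcal{L}$ is non-negative on such polynomials. Finally, $\mathcal{L}(1) = -\mathcal{L}(-1) > 0$, so rescaling produces the pseudoexpectation claimed in (b).

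The main obstacle is verifying that $K$ is closed, which is precisely where the explicit boundedness hypothesis is needed. In general, the Minkowski sum of the closed SOS cone with the linear subspace $\{\sum_i Q_i P_i\}$ need not be closed. The hypothesis that $\sum_i x_i^2 + S_0 - M$ lies in the ideal generated by $\mathcal{E}$ for some SOS polynomial $S_0$ and constant $M \in \mathbb{R}$ is an Archimedean-type condition (in the spirit of Putinar's Positivstellensatz), which lets one bound the coefficients of some admissible representation $S + \sum_i Q_i P_i$ of any element of $K$ uniformly in the element's norm. Combined with closedness of the degree-bounded SOS cone itself (the image of the PSD cone under the proper continuous linear map sending a Gram matrix $M$ to $\sum_{\alpha,\alpha'} M_{\alpha,\alpha'} x^{\alpha+\alpha'}$), this compactness yields closedness of $K$. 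Verifying this closedness rigorously is the technical heart of the proof; the remainder of the argument is routine linear algebra and standard separation.
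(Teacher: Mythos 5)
Your proposal follows the same overall strategy as the paper: the same easy direction via applying the pseudoexpectation to the refutation identity, and the same cone-separation argument for the converse, with explicit boundedness invoked at exactly the same point. The one place you diverge is how the topological issue is discharged. You propose to prove that the full cone $K=\Sigma_\ell+\{\sum_i Q_iP_i\}$ is closed and then apply strict separation; the paper instead proves only the weaker, localized statement that \emph{if $-1$ lies on the boundary of the cone then $-1$ already lies in the cone}, which is all that separation requires. The paper's argument for this is short: if $-1+\varepsilon P\in\mathcal{C}$ for all $\varepsilon>0$, explicit boundedness supplies a certificate $N-P-R\in\mathcal{C}$ (with $R$ a sum of squares), and then $(-1+\varepsilon P)+\varepsilon(N-P-R)+\varepsilon R=-1+\varepsilon N$ is a nonnegative combination of cone elements equal to a negative constant, so $-1\in\mathcal{C}$ after rescaling. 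This buys a genuinely lighter proof: it never needs uniform control on the Gram matrices appearing in representations. Your closedness route is a true statement under the Archimedean hypothesis, but the step you describe as bounding ``the coefficients of some admissible representation uniformly in the element's norm'' is more delicate than your sketch suggests --- the ideal part $\sum_i Q_iP_i$ admits representations with arbitrarily large cancelling coefficients, so the bound cannot be read off termwise and requires an argument of its own (this is essentially the content of strong-duality results for truncated quadratic modules in the literature). Since you explicitly flag this as the unverified technical heart, I would either supply that argument or, more economically, replace it with the paper's local trick at $-1$, which uses the boundedness certificate directly and avoids the closedness question altogether.
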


\begin{proof}

First, suppose there exists a degree-$\ell$ refutation of the system $\mathcal{E}$, i.e., there exists polynomials $Q_1,\ldots,Q_m\in\mathbb{R}[x]$ and a sum-of-squares polynomial $R\in\mathbb{R}[x]$ so that $-1=R+\sum_iQ_iP_i$ and $\deg Q_iP_i\le \ell$.
Let $\{x\}$ be any pseudodistribution.
We are to show that $\{x\}$ does not satisfy $\mathcal{E}$.
Indeed, $\pE \sum_i Q_i P_i = -\pE 1 -\pE R\le -1$, which means that $\pE Q_i P_i\neq 0$ for at least one $i\in[m]$.
Therefore, $\{x\}$ does not satisfy $\mathcal{E}$.

Next, suppose there does not exist a degree-$\ell$ refutation of the system $\mathcal{E}$.
We are to show that there exists a pseudodistribution that satisfies $\mathcal{E}$.
Let $\mathcal{C}$ be the cone of all polynomials of the form $R+\sum_i Q_iP_i$ for sum-of-squares $R$ and polynomials $Q_i$ with $\deg Q_i P_i\le \ell$.
Since $\mathcal{E}$ does not have a degree-$\ell$ refutation, the constant polynomial $-1$ is not contained in $\mathcal{C}$.
We claim that from our assumption that the system~$\mathcal{E}$ is explicitly bounded it follows that $-1$ also cannot lie on the boundary of $\mathcal{C}$.
Assuming this claim, the hyperplane separation theorem implies that there exists a linear form $L$ such that $L(-1)<0$ but $L(P)\ge 0$ for all $P\in\mathcal{C}$.
By rescaling, we may assume that $L(1)=1$.
Now this linear form satisfies all conditions of a pseudoexpectation operator for the system~$\mathcal{E}$.

\emph{Proof of claim.}
We will show that if $-1$ lies on the boundary of $\mathcal{C}$, then also $-1\in\mathcal{C}$.
If $-1$ is on the boundary of $\mathcal{C}$, then there exists a polynomial~$P\in\mathbb{R}[X]_\ell$ such that $-1+\varepsilon P\in \mathcal{C}$ for all $\varepsilon>0$ (using the convexity of $\mathcal{C}$).
Since $\mathcal{E}$ is explicitly bounded, for every polynomial~$P\in\mathbb{R}[X]_\ell$, the cone $\mathcal{C}$ contains a polynomial of form $N-P-R$ for a sum-of-square $R$ and a number $N$.
(Here, the polynomial~$N-P-R\in\mathcal{C}$ is a certificate that $P\le N$ over the solution set of $\mathcal{E}$.
Such a certificate is easy to obtain when $\mathcal{E}$ is explicitly bounded.
We are omitting the details.)
At this point, we see that $-1$ is a nonnegative combination of the polynomials~$-1+\varepsilon P$, $N-P-R$, and $R$ for $\varepsilon<1/N$.
Since these polynomials are contained in $\mathcal{C}$, their nonnegative combination $-1$ is also contained in the cone~$\mathcal{C}$.
\end{proof}

\paragraph{Recipe for using pseudoexpectations algorithmically.}
In many applications we will use the following dual form of the SOS algorithm:

\begin{center}
\fbox{\begin{minipage}{\textwidth}
\begin{center}\textbf{The degree-$\ell$ Sum-of-Squares Algorithm (dual form)}  \end{center}
\noindent\textbf{Input:} Polynomials $P_0,\ldots,P_m\in\mathbb{R}[x]$ \\
\noindent\textbf{Goal:} Estimate $\min P_0(x)$ over all $x$ with $P_1(x)=\ldots=P_m(x)=0$ \\
\noindent\textbf{Operation:}  Output the smallest value $\varphi^{(\ell)}$
such that there is a degree-$\ell$ pseudodistribution $\{ x \}$ satisfying the system,
\begin{displaymath}
\{ P_0= \varphi^{(\ell)},P_1=0,\ldots, P_m(x) =0 \}\,.
\end{displaymath}

\end{minipage}}
\end{center}


Theorem~\ref{thm:duality} shows that in the cases we are interested in, both variants of the SOS algorithm
will output the same answer.
Regardless, a similar proof to that of Theorem~\ref{thm:SOS} shows that the dual form of the SOS algorithm
can also be computed in time $n^{O(\ell)}$.
Thus, when using the SOS meta-algorithm, instead of trying to argue from the non-existence of a proof,
we will use the existence of a pseudodistribution.
Specifically, to show that the algorithm provides an $f(\cdot)$ approximation in the sense of (\ref{eq:meta-works}),
what we need to show is that given a degree-$\ell$ pseudodistribution
$\{ x \}$ satisfying the system $\{ P = \varphi, P_1= 0,\ldots,P_m=0\}$, we can find some particular $x^*$ that satisfies
$P(x^*) \leq f(\varphi)$.
Our approach to doing so (based on the authors' paper with Kelner~\cite{BarakKS14}) can be summarized as follows:

\begin{quote}
\emph{Solve the problem pretending that $\{ x \}$ is an \emph{actual distribution} over solutions, and if all the steps you used have low-degree SOS proofs, the solution still works even when $\{x\}$ is a low-degree \emph{pseudodistribution}.}
\end{quote}

It may seem that coming up with an algorithm for the actual distribution case is trivial, as any element in the support of the distribution would be a good solution.
However note that even in the case of a real distribution, the algorithm does not get sampling access to the distribution, but only access to its low-degree moments.
Depending on the reader's temperament, the above description of the algorithm, which ``pretends''
pseudodistributions  are real ones, may sound tautological or just wrong.
Hopefully it will be clearer after the next two sections, where we use this approach to show how the SOS algorithm can match the guarantee of Cheeger's Inequality for computing the expansion, to find planted sparse vectors in random subspaces, and to approximately recover sparsely used dictionaries.


\section{Approximating expansion via sums of squares}
\label{sec:sos-cheeger}

Recall that the \emph{expansion}, $\phi_G$, of a $d$-regular graph $G=(V,E)$ is the minimum of $\phi_G(S)=|E(S,V\setminus S)||/(d|S|)$ over all sets $S$ of size at most $|V|/2$.
Letting $x=\mathds 1_{S}$ be the characteristic vector\footnotemark{} of the set $S$ the expression $|E(S,V\setminus S)|$ can be written as $\sum_{\{i,j\}\in E} (x_i-x_j)^2$ which is a quadratic polynomial in $x$.
\footnotetext{The $i$-th coordinate of vector $\mathds 1_S$ is equal $1$ if $i\in S$ and equal $0$ otherwise.}
Therefore, for every $k$, computing the value $\phi_G(k) = \min_{|S|=k}|E(S,V\setminus S)|/(dk)$ can be phrased as the question of minimizing a polynomial $P_0$ over the set of $x$'s satisfying the equations $\{ x_i^2 - x_i =0\}_{i=1}^n$ and $\{\sum_{i=1}^n x_i = k\}$.
Let  $\cramped{\phi_G^{\scriptscriptstyle (\ell(k))}}$ be the degree-$\ell$ SOS estimate for $\phi_G(k)$.
We call $\cramped{\phi_G^{\scriptscriptstyle (\ell)}} = \min_{k \leq n/2} \phi_G(k)$ the degree-$\ell$ SOS estimate for $\phi_G$.
Note that $\cramped{\phi_G^{\scriptscriptstyle (\ell)}}$ can be computed in $\cramped{n^{O(\ell)}}$ time.
For the case $\ell=2$, the following theorem describes the approximation guarantee of the estimate $\cramped{\phi_G^{\scriptscriptstyle (\ell)}}$.

\begin{theorem} \label{thm:sparsest-cut}
There exists an absolute constant $c$ such that for every graph $G$
\begin{equation}
\phi_G \leq c\sqrt{\phi_G^{(2)}} \label{eq:sparsest-cut-sdp}
\end{equation}
\end{theorem}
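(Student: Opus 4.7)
The plan is to invoke the duality between SOS refutations and pseudodistributions (Theorem~\ref{thm:duality}) in order to replace the \emph{non-existence} of a low-degree SOS refutation with the \emph{existence} of a degree-$2$ pseudodistribution, and then round that pseudodistribution to an actual small-expansion cut. Concretely, by definition of $\phi_G^{(2)}$ there is some $k\leq n/2$ such that the system
\[
\mathcal{E}_k \;=\; \bigl\{\, x_i^2=x_i \text{ for every } i,\ \ \textstyle\sum_i x_i=k,\ \ \tfrac{1}{dk}\sum_{\{i,j\}\in E}(x_i-x_j)^2 = \varphi \,\bigr\}
\]
with $\varphi=\phi_G^{(2)}$ admits no degree-$2$ refutation; Theorem~\ref{thm:duality} then supplies a degree-$2$ pseudodistribution $\{x\}$ satisfying $\mathcal{E}_k$.

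Rather than round $\{x\}$ directly to a set, my strategy is to extract from it a real test function $f\in\mathbb{R}^V$ and then feed that function to the classical Cheeger inequality~(\ref{eq:cheeger}). Precisely, it suffices to exhibit a nonzero $f$ orthogonal to $\mathbf{1}$ with Rayleigh quotient
\[
R(f) \;:=\; \frac{\sum_{\{i,j\}\in E}(f_i-f_j)^2}{d\sum_i f_i^2} \;=\; O(\varphi),
\]
for this would force $(d-\lambda_2)/d\leq R(f)$, and then~(\ref{eq:cheeger}) would deliver $\phi_G\leq 2\sqrt{R(f)/2}=O(\sqrt{\varphi})$, as desired.

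The construction of $f$ exploits the full degree-$2$ pseudomoment structure, not merely the first moments. Since the Gram matrix $M_{ij}=\pE x_i x_j$ is positive semidefinite (this is precisely the axiom $\pE P^2\geq 0$), one may write $M=V^{\top}V$ and set $f_i:=\langle g,v_i\rangle$ for a standard Gaussian $g$. By construction $\mathbb{E}[f_i f_j]=\pE x_i x_j$; in particular $\mathbb{E}\sum_{\{i,j\}\in E}(f_i-f_j)^2 = dk\varphi$ and $\mathbb{E}\sum_i f_i^2=k$, while $\sum_i f_i$ has zero variance under $g$ (because the constraint $\sum_i x_i=k$ lies in $\mathcal{E}_k$), so $\sum_i f_i=k$ holds almost surely. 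Centering $\tilde f_i:=f_i-k/n$ produces a vector orthogonal to $\mathbf{1}$ with $\mathbb{E}\sum_i \tilde f_i^2 = k(1-k/n)\geq k/2$ since $k\leq n/2$. Taking expectations of the pointwise inequality $f^{\top} L f\geq \mu_2\cdot f^{\top}(I-J/n)f$ (where $L=I-A/d$, $\mu_2=(d-\lambda_2)/d$, and $J$ is the all-ones matrix) then yields $\mu_2\leq 2\varphi$, and substituting into~(\ref{eq:cheeger}) closes the argument with $c=2$.

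The main obstacle is understanding why a \emph{random} projection is needed at all: the naive attempt $f_i:=\pE x_i$ can fail, because the pseudodistribution may be symmetric across vertices (say, every $\pE x_i$ equal to $k/n$), collapsing the denominator of $R(f)$ to zero even when the numerator is small. The second-moment content of $\pE$, accessed through the Gram vectors $v_i$, is precisely what breaks this degeneracy almost surely, and so it is the use of degree $2$ (rather than merely degree $1$) of the pseudoexpectation that drives the entire rounding.
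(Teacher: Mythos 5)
Your proof is correct in substance, but it takes a genuinely different route from the paper's. The paper rounds the degree-$2$ pseudodistribution \emph{directly} into a cut: it samples a Gaussian $y$ matching the first two pseudomoments, thresholds at $1/2$ to get $S^*=\{i : y_i\ge 1/2\}$, and bounds the per-edge cut probability of the resulting two-dimensional Gaussian by $O(\sqrt{\varphi})$ while separately controlling $|S^*|$ --- in effect re-proving the hard direction of Cheeger's inequality in the pseudodistribution language (and introducing the Gaussian sampling/rounding tool reused later for sparse vector recovery). You instead short-circuit the rounding: you use the pseudomoments only to certify the spectral bound $(d-\lambda_2)/d\le 2\varphi$, and then invoke the classical Cheeger inequality (\ref{eq:cheeger}) as a black box to produce the cut. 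Your route is more economical and handles all $k\le n/2$ uniformly via $k(1-k/n)\ge k/2$ (the paper only treats $k=n/2$ explicitly), but it hides the combinatorial rounding inside the cited theorem rather than exhibiting it; the paper's footnote relating $\lambda_2$ to a relaxed degree-$2$ system is essentially the observation you are exploiting. Note also that the Gaussian is dispensable in your argument: since $L-\mu_2(I-J/n)\succeq 0$, the quadratic form $x^\top L x-\mu_2\, x^\top(I-J/n)x$ is a sum of squares of linear forms, so $\pE$ of it is nonnegative and the bound $\varphi k\ge \mu_2 k(1-k/n)$ follows directly from the pseudomoments.

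One small slip worth fixing: with your construction $f_i=\langle g,v_i\rangle$ the vector $f$ has mean zero, so $\sum_i f_i$ is a centered Gaussian with variance $\pE(\sum_i x_i)^2=k^2$, not zero; consequently $f_i-k/n$ is \emph{not} orthogonal to $\mathbf{1}$, and $\mathbb{E}\sum_i(f_i-k/n)^2=k+k^2/n$ rather than $k(1-k/n)$. The claims you make hold for the \emph{shifted} Gaussian with mean $\pE x_i$ and the centered covariance (as in Section~\ref{sec:sparsest-cut:proof}), or one can center by the empirical mean $\tfrac1n\sum_j f_j$. Either fix is cosmetic: your final inequality $\mu_2\le 2\varphi$ depends only on the second pseudomoments and on $L\succeq\mu_2(I-J/n)$, which requires no orthogonality, so the conclusion $\phi_G\le 2\sqrt{\smash[b]{\varphi}}$ stands.
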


Before we prove Theorem~\ref{thm:sparsest-cut}, let us discuss its significance.
Theorem~\ref{thm:sparsest-cut} is essentially a restatement of Cheeger's Inequality in the SOS language---the degree $2$-SOS algorithm is the UGC meta algorithm which is essentially the same as the algorithm based on the second-largest eigenvalue.\footnote{
The second-largest eigenvalue is directly related to the minimum value of $\varphi$ such that there exists a degree-$2$ pseudodistribution satisfying the more relaxed system $\{\sum_{\{ij\}\in E} (x_i-x_j)^2=\varphi \cdot  d n/2, \sum_i x_i=n/2, \sum_i x_i^2 =n/2\}$.}
There are examples showing that~(\ref{eq:sparsest-cut-sdp}) is tight, and so we cannot get better approximation
using degree $2$ proofs.
But can we get a better estimate using degree $4$ proofs?
Or degree $\log n$ proofs?
We don't know the answer, but if the Small-Set Expansion Hypothesis is true, then beating the estimate (\ref{eq:sparsest-cut-sdp}) is \ensuremath{\mathbf{NP}} -hard,  which means (under standard assumptions) that to do so we will need to use
proofs of degree at least $n^{\Omega(1)}$.

This phenomenon repeats itself in other problems as well.
For example, for both the Grothendieck Inequality and the \ensuremath{\text{\textsc{Max Cut}}}\  problems, the SSEH (via the UGC) predicts that beating the estimate obtained by
degree-$2$ proofs will require degree $\ell = n^{\Omega(1)}$.
As in the case of expansion, we have not been able to confirm or refute these predictions.
However, we will see some examples where using higher degree proofs \emph{does} help, some of them suspiciously close
in nature to the expansion problem.

One such example comes from the beautiful work of Arora, Rao and Vazirani~\cite{AroraRV09} who showed that
\[
\phi_G \leq O(\sqrt{\log n}) \cdot \phi_G^{(6)} \;,
\]
which is better than the guarantee of Theorem~\ref{thm:sparsest-cut} for $\phi_G \ll 1/\log n$.
However, this is not known to contradict the SSEH or UGC, which apply to the case when $\phi_G$ is a small constant.

As we will see in Section~\ref{sec:SSEvsSOS}, for the small set expansion problem of approximating $\phi_G(S)$ for small sets $S$, we can beat the degree $2$ bounds with degree $\ell = n^{\tau}$ proofs where $\tau$ is a parameter tending to zero with the parameter $\varepsilon$ of the SSEH~\cite{AroraBS10}.
This yields a sub-exponential algorithm for the small-set expansion problem (which can be extended to the \textsc{Unique Games} problem as well) that ``barely misses'' refuting the SSEH and UGC.
We will also see that degree $O(1)$ proofs have surprising power in other settings that are closely related to the SSEH/UGC, but again at the moment still fall short of refuting those conjectures.

\subsection{Proof of Theorem~\ref{thm:sparsest-cut}}
\label{sec:sparsest-cut:proof}

This proof is largely a reformulation of the standard proof of a discrete variant of Cheeger's Inequality,
phrased in the SOS language of pseudodistributions, and hence is included here mainly to help clarify these notions,
and to introduce a tool--- sampling from a distribution matching first two moments of a pseudodistribution--- that will be useful for us later on.
By the dual formulation, to prove Theorem~\ref{thm:sparsest-cut} we need to show that given a
pseudodistribution $\{ x \}$ over characteristic vectors of size-$k$ sets $S$ of size $k \leq n/2$ with
$|E(S,V\setminus S)|= \varphi dk$, we can find a particular set $S^*$ of size
at most $n/2$
such that $E(S^*,V\setminus S^*) \leq O(\sqrt{\varphi})d|S^*|$.
For simplicity, we consider the case $k=n/2$  (the other cases can be proven in a very similar way).
The distribution $\{ x \}$ satisfies the constraints
$\{ \sum x_i = n/2 \}$, $\{ x_i^2 = x_i \}$ for all $i$,
and $\{ \sum_{\{i,j\}\in E} (x_i-x_j)^2 = \varphi d \sum_i x_i \}$.
The algorithm to find $S^*$ is quite simple:

\begin{enumerate}
\item Choose $(y_1,\ldots,y_n)$ from a random Gaussian distribution with the same quadratic moments as $\{x\}$ so that $\bbE y_i = \pE x_i$
and $\bbE y_iy_j = \pE x_ix_j$ for all $i,j\in[n]$.
(See details below.)
\item Output the set $S^* = \{ i \mid y_i \geq 1/2 \}$ (which corresponds to the 0/1 vector closest to $y$).
\end{enumerate}
We remark that the set produces by the algorithm might have cardinality larger than $n/2$, in which case we will take the complement of $S^*$.

\paragraph{Sampling from a distribution matching two moments.}
We will first give a constructive proof the well-known fact that for every distribution over $\mathbb{R}^n$, there exists an $n$-dimensional Gaussian distribution with the same quadratic moments.
Given the moments of a distribution $\{ x \}$ over $\mathbb{R}^n$, we can sample a Gaussian distribution $\{ y \}$ matching the first two moments of $\{x \}$ as follows.
First, we can assume $\bbE x_i = 0$ for all $i$ by shifting variables if necessary.
Next, let $v^1,\ldots,v^n$ and $\lambda_1,\ldots,\lambda_n$ be the eigenvectors and eigenvalues
of the matrix $M_{i,j} = \bbE x_i x_j$.
(Note that $M$ is positive semidefinite and so $\lambda_1,\ldots,\lambda_n \geq 0$.)
Choose i.i.d random standard Gaussian variables $w_1,\ldots,w_n$ and define $y = \sum_k \sqrt{\lambda_k} w_k v^k$.
Since $\bbE w_kw_{k'}$ equals $1$ if $k=k'$ and equals $0$ otherwise,
\[
\bbE y_iy_j = \sum_k \lambda_k (v^k)_i(v^k)_j = M_{i,j} \;.
\]
One can verify that if $\{ x \}$ is a degree-$2$ pseudodistribution then the second moment matrix $M$ of the shifted
version of $x$ (such that $\pE x_i = 0$ for all $i$) is positive-semidefinite, and hence the above can
be carried for pseudodistributions of degree at least $2$ as well.
Concretely, if we let $\bar x = \pE x$ be the mean of the pseudodistribution, then $M=\pE (x-\bar x)(x-\bar x)^\top$.
This matrix is positive semidefinite because every test vector $z\in\mathbb{R}^n$ satisfies $z^\top M z = \pE \bigl( z^\top (x-\bar x)\bigr)^2\ge 0$.

\paragraph{Analyzing the algorithm.}
The analysis is based on the following two claims:
(i) the set $S^*$ satisfies $n/3\le \lvert  S^* \rvert\le 2n/3 $ with constant probability
and (ii) in expectation $|E(S^*,V\setminus S^*)| \leq O(\sqrt{\varphi} dn)$.

We will focus on two extreme cases that capture the heart of the arguments for the claims.
In the first case, all variables $y_i$ have very small variance so that $\bbE y_i^2\approx (\bbE y_i)^2$.
In this case, because our constraints imply that $\bbE y_i^2 = \bbE y_i$, every variable satisfies either $\bbE y_i^2\approx 0$ or $\bbE y_i^2\approx 1$, which means that the distribution of the set $S^*$ produced by the algorithm is concentrated around a particular set, and it is easy to verify that this set satisfies the two claims.
In the second, more interesting case, all variables $y_i$ have large variance, which means $\bbE y_i^2=1/2$ in our setting.

In this case, each event $\{y_i\ge 1/2\}$ has probability $1/2$ and therefore $\bbE\lvert  S^\ast \rvert=n/2$.
Using that the quadratic moments of $\{y\}$ satisfy $\bbE\sum_i y_i=n/2$ and $\bbE(\sum_i y_i)^2=(n/2)^2$, one can show that these events cannot be completely correlated, which allows us to control the probability of the event $n/3\le \lvert  S^* \rvert\le 2n/3$ and establishes (i).
For the second claim, it turns out that by convexity considerations it suffices to analyze the case that all edges contribute equally to the term $\tfrac{1}{\lvert  E \rvert}\sum_{\{i,j\}\in E} \pE (x_i-x_j)^2 = \varphi\,,$
so that $\pE (x_i-x_j)^2 = \varphi$ for all $\{ i,j \} \in E$.
So we see that $\{y_i,y_j\}$ is a 2-dimensional Gaussian distribution with mean $(\tfrac12,\tfrac12)$ and covariance
\begin{math}
  \tfrac14
  \left (\begin{smallmatrix}
    1 & 1-2\varphi \\
    1-2\varphi & 1
  \end{smallmatrix}
  \right )
\end{math}
Thus, in order to bound the expected value of
\begin{math}
  |E(S^*,V \setminus S^*)|\,, 
\end{math}
we need to bound the probability of the event ``$y_i \geq 1/2$ and $y_j < 1/2$'' for this particular Gaussian distribution, which amounts to a not-too-difficult calculation that indeed yields an upper bound of $O(\sqrt{\varphi})$ on this probability. \qed


\section{Machine learning with Sum of Squares}
\label{sec:ML}

In this section, we illustrate the computational power of the sum-of-squares method with applications to two basic problems in unsupervised learning.
In these problems, we are given samples of an unknown distribution from a fixed, parametrized family of distributions and the goal is to recover the unknown parameters from these samples.
Despite the average-case nature of these problems, most of the analysis in these applications will be for deterministic problems about polynomials that are interesting in their own right.

\medskip

The first problem is \textsc{sparse vector recovery}.
Here, we are given a random basis of a $d$-dimensional linear subspace $U\subseteq \mathbb{R}^n$ of the form
\begin{displaymath}
  U=\Span\{ \cramped{x^{\scriptscriptstyle (0)}},\cramped{x^{\scriptscriptstyle (1)}},\ldots,\cramped{x^{\scriptscriptstyle (d)}} \}\,,
\end{displaymath}
where $\cramped{x^{\scriptscriptstyle (0)}}$ is a sparse vector and $\cramped{x^{\scriptscriptstyle (1)}},\ldots,\cramped{x^{\scriptscriptstyle (d)}}$ are independent standard Gaussian vectors.
The goal is to reconstruct the vector $\cramped{x^{\scriptscriptstyle (0)}}$.
This is a natural problem in its own right, and is also a useful subroutine in various settings; see~\cite{DemanetH13}.
Demanet and Hand~\cite{DemanetH13}  gave an algorithm (based on~\cite{SpielmanWW12}) that recovers $\cramped{x^{\scriptscriptstyle (0)}}$ by searching for the vector $x$ in $U$ that maximizes $\|x\|_{\infty}/\|x\|_1$ (which can be done efficiently by $n$ linear programs).
It is not hard to show that $\cramped{x^{\scriptscriptstyle (0)}}$ has to have  less than $|n|/\sqrt{d}$ coordinates for it to be maximize this ratio,\footnote{See Lemma~\ref{lem:dimbound} below for a related statement.} and hence this was a limitation
of prior techniques.
In contrast, as long as $d$ is not too large (namely, $d = O(\sqrt{n})$), the SOS method can recover $\cramped{x^{\scriptscriptstyle (0)}}$ as long as it has less than $\varepsilon n$ coordinates for some constant $\varepsilon > 0$~\cite{BarakKS14}.

The second problem we consider is \textsc{sparse dictionary learning}, also known as \textsc{sparse coding}.
Here, we are given independent samples $\cramped{y^{\scriptscriptstyle (1)}} ,\ldots, \cramped{y^{\scriptscriptstyle (R)}}\in\mathbb{R}^n$ from an unknown distribution of the form $\{y = A x \}$, where $A\in \mathbb{R}^{n\times m}$ is a matrix and $x$ is a random $m$-dimensional vector from a distribution over sparse vectors.
This problem, initiated by the work Olshausen and Field~\cite{olshausen1996emergence} in computational neuroscience, has found a variety of uses in machine learning, computer vision, and image processing (see, e.g.~\cite{AgarwalANT13} and the references therein).
The appeal of this problem is that intuitively data should be sparse in the ``right'' representation
(where every coordinate corresponds to a meaningful feature),
and finding this representation can be a useful first step for further processing, just as representing sound or image data in the Fourier or Wavelet bases is often a very useful primitive.
While there are many heuristics use to solve this problem, prior works giving rigorous recovery guarantees such as~\cite{SpielmanWW12,AgarwalANT13,AroraGM13} all required the vector $x$ to be \emph{very} sparse, namely less than $\sqrt{n}$ nonzero entries.\footnote{If the distribution $x$ consists of $m$ independent random variables then better guarantees can be achieved using \emph{Independent Component Analysis (ICA)}~\cite{comon1994independent}. See~\cite{GoyalVX13} for the current state of art in this setting.
However we are interested here in the more general case.}
In contrast, the SOS method can be used to approximately recover the dictionary matrix $A$ as long as $x$ has $o(n)$ nonzero (or more generally, significant) entries~\cite{BarakKS15}.

\subsection{Sparse vector recovery}
\label{sec:sparse-vector-recovery}

We say a vector $x$ is $\mu$-sparse if the 0/1 indicator $\mathds 1_{\supp x}$ of the support of $x$ has norm-squared $\mu=\lVert  \mathds 1_{\supp x} \rVert^2_2$.
The ratio $\mu/\lVert  \mathds 1 \rVert_2^2$ is the fraction of non-zero coordinates in $x$.

\begin{theorem}
  \label{thm:sparse-recovery}
  There exists a polynomial-time approximation algorithm for \textsc{sparse vector recovery} with the following guarantees:
  Suppose the input of the algorithm is an arbitrary basis of a $d+1$-dimensional linear subspace $U\subseteq \mathbb{R}^n$ of the form $U=\Span\{\cramped{x^{\scriptscriptstyle (0)}},\cramped{x^{\scriptscriptstyle (1)}}\ldots,\cramped{x^{\scriptscriptstyle (d)}}\}$ such that $\cramped{x^{\scriptscriptstyle (0)}}$ is a $\mu$-sparse unit vector with $\mu \leq \varepsilon\cdot \lVert  \mathds 1 \rVert_2^2$ and $\cramped{x^{\scriptscriptstyle (1)}},\ldots,\cramped{x^{\scriptscriptstyle (d)}}$ are standard Gaussian vectors orthogonal to $\cramped{x^{\scriptscriptstyle (0)}}$ with $d\ll \sqrt n$.
  Then, with probability close to $1$, the algorithm outputs a unit vector~$x$ that has correlation $\langle  x,\cramped{x^{\scriptscriptstyle (0)}} \rangle^2\ge 1- O(\varepsilon)$ with $\cramped{x^{\scriptscriptstyle (0)}}$.
\end{theorem}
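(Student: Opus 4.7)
The plan is to use the degree-$4$ SOS algorithm to optimize $\lVert x\rVert_4^4$ over unit vectors in $U$, exploiting that sparse unit vectors have large $4$-norm while unit vectors in a random subspace of not-too-high dimension do not. By Cauchy--Schwarz the planted vector satisfies $\lVert x^{(0)}\rVert_4^4 \ge \lVert x^{(0)}\rVert_2^4/\mu \ge 1/(\varepsilon n)$, while a generic unit vector in $V=\Span\{x^{(1)},\ldots,x^{(d)}\}$ has $4$-norm of order only $1/n$ (its coordinates are essentially Gaussian with variance $\Theta(1/n)$). Hence maximizing $\pE\lVert x\rVert_4^4$ over pseudodistributions on $U\cap S^{n-1}$ should force the pseudodistribution to concentrate near $\pm x^{(0)}$, and a spectral rounding of the pseudo-moment matrix will recover it.

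Concretely, I would first set up the degree-$4$ SOS program: find a degree-$4$ pseudodistribution $\{x\}$ satisfying $\{\sum_i x_i^2=1\}$ together with one linear equation $\{\langle v,x\rangle=0\}$ for each $v$ in a basis of $U^\perp$, maximizing $\pE\sum_i x_i^4$. The planted vector is an actual feasible distribution, so the optimum $\varphi$ satisfies $\varphi\ge 1/(\varepsilon n)$. The crucial technical step is a structural lemma, certified inside SOS: when $d\ll\sqrt n$, with high probability over the Gaussians there is a degree-$4$ sum-of-squares proof that $\lVert y\rVert_4^4\preccurlyeq (C/n)\cdot \lVert y\rVert_2^4$ for every $y\in V$ and some absolute constant $C$. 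Writing $y=X\alpha$ for the $n\times d$ Gaussian matrix $X$, this reduces to a spectral-norm bound on the $d^2\times d^2$ matrix with entries $\sum_i X_{ij}X_{ik}X_{ij'}X_{ik'}$; standard random-matrix concentration provides the estimate once $d^2\ll n$, and the SOS certificate is read off from its eigendecomposition.

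Combining the two pieces, I would decompose $x=\alpha x^{(0)}+y$ with $y\in V$ (so $\alpha=\langle x,x^{(0)}\rangle$ and $\lVert y\rVert_2^2=1-\alpha^2$) and use an SOS ``triangle inequality'' for $\lVert\cdot\rVert_4^4$ together with the structural lemma to obtain $\pE\lVert x\rVert_4^4 \le \pE\alpha^4\lVert x^{(0)}\rVert_4^4 + O(1/n)\cdot\pE(1-\alpha^2)^2 + \text{small cross terms}$, where the cross terms (involving mixed products of $x^{(0)}$ and $y$) are controlled by Cauchy--Schwarz-style SOS inequalities using the orthogonality $\langle x^{(0)},y\rangle=0$. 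Since $\lVert x^{(0)}\rVert_4^4\le 1$ and $\pE\lVert x\rVert_4^4\ge 1/(\varepsilon n)$, this inequality forces $\pE\alpha^2\ge 1-O(\varepsilon)$. The rounding is then immediate: the PSD matrix $M=\pE xx^\top$ has $\mathrm{tr}\,M=1$ and $\langle x^{(0)},Mx^{(0)}\rangle\ge 1-O(\varepsilon)$, so its top eigenvector $x^\ast$ satisfies $\langle x^\ast,x^{(0)}\rangle^2\ge 1-O(\varepsilon)$ as required.

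The main obstacle is the structural SOS lemma for the random subspace. It does not suffice to know probabilistically that $V$ contains no unit vector with abnormally large $4$-norm; the bound must be \emph{witnessed} by a degree-$4$ identity. Packaging the random-matrix estimate into a clean PSD certificate, and tracking the constant $C$ tightly enough that the final analysis yields the advertised $1-O(\varepsilon)$ correlation (rather than something weaker), is the delicate part, and it is what pins down the dimension threshold $d\ll\sqrt n$ appearing in the theorem.
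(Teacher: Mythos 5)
Your proposal matches the paper's proof in all essential respects: the same degree-$4$ encoding via the $\ell_4/\ell_2$ ratio, the same key structural lemma (a degree-$4$ SOS certificate that $\lVert P'u\rVert_4^4\preccurlyeq O(1/n)\lVert u\rVert_2^4$ on the random part, obtained from spectral concentration of the $d^2\times d^2$ coefficient matrix, which is exactly the paper's Lemma~\ref{lem:random-norm-bound}), and the same decomposition $Pu=\alpha_0 x^{(0)}+P'u$ controlled by an SOS triangle inequality for the pseudo-expected $4$-norm. The only deviations are cosmetic---you maximize $\pE\lVert u\rVert_4^4$ rather than imposing $\lVert Pu\rVert_4^4=1/\mu_0$ as a constraint, and you round via the top eigenvector of $\pE uu^\top$ rather than by Gaussian sampling---and both variants work equally well.
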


Our algorithm will follow the general recipe we described in Section~\ref{sec:pseudo-dist}:

\begin{quote}
\emph{Find a system of polynomial equations $\mathcal{E}$ that captures the intended solution $\cramped{x^{\scriptscriptstyle (0)}}$, then pretend you are given a distribution $\{ u \}$ over solutions of $\mathcal{E}$ and show how you could recover a single solution $u^*$ from the low order moments of $\{ u \}$.}
\end{quote}

Specifically,  we come up with a system $\mathcal{E}$ so that desired vector $\cramped{x^{\scriptscriptstyle (0)}}$ satisfies all equations, and it is essentially the only solution to $\mathcal{E}$.
Then, using the SOS algorithm, we compute a degree-$4$ pseudodistribution $\{ u \}$ that satisfies $\mathcal{E}$.
Finally, as in Section~\ref{sec:sparsest-cut:proof}, we sample a vector $u^*$ from a Gaussian distribution that has the same quadratic moments as the pseudodistribution $\{ u \}$.

\paragraph{How to encode this problem as a system of polynomial equations?}

By Cauchy--Schwarz, any $\mu$-sparse vector $x$ satisfies $\lVert  x \rVert_2^2 \le \lVert  x \rVert^2_{2p} \cdot  \lVert  \mathds 1_{\supp x} \rVert_q=\lVert  x \rVert^2_{2p} \cdot \mu^{1-1/p}$ for all $p,q\ge 1$ with $1/p+1/q=1$.
In particular, for $p=2$, such vectors satisfy $\lVert x  \rVert_4^4\ge \lVert  x \rVert_2^4/\mu$.
This fact motivates our encoding of \textsc{sparse vector recovery} as a system of polynomial equations.
If the input specifies subspace $U\subseteq \mathbb{R}^n$, then we compute the projector $P$ into the subspace $U$ and choose the following polynomial equations:
$\lVert  u \rVert_2^2=1$ and $\lVert  P u \rVert_4^4=1/\mu_0$, where $\mu_0=\lVert  {}\cramped{x^{\scriptscriptstyle (0)}}  \rVert_2^4/\lVert  {}\cramped{x^{\scriptscriptstyle (0)}} \rVert_4^4$.
(We assume here the algorithm is given $\mu_0\le \mu$ as input, as we can always guess a sufficiently close approximation to it.)
%

\paragraph{Why does the sum-of-squares method work?}

The analysis of algorithm has two ingredients.
The first ingredient is a structural property about projectors of random subspaces.

\begin{lemma}
    \label{lem:random-norm-bound}
  Let $U'\subseteq \mathbb{R}^n$ be a random $d$-dimensional subspace with $d\ll \sqrt n$ and let $P'$ be the projector into $U'$.
  Then, with high probability, the following sum-of-squares relation over $\mathbb{R}[u]$ holds for $\mu' \ge \Omega(1)\cdot \lVert  \mathds 1 \rVert_2^2$,
  \begin{displaymath}
    \lVert  P'u \rVert_4^4\preccurlyeq \lVert  u \rVert_2^4/\mu'
    \,.
  \end{displaymath}
\end{lemma}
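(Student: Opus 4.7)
The plan is to pick an orthonormal basis of $U'$, reformulate the target degree-$4$ SOS relation as a PSD inequality for a $d^2\times d^2$ operator built out of $n$ independent (approximately) Gaussian vectors in $\mathbb{R}^d$, and control that operator via matrix concentration around its expectation.

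Concretely, let $v^{(1)},\ldots,v^{(d)}$ be an orthonormal basis of $U'$, arrange it as the columns of $V\in\mathbb{R}^{n\times d}$, and let $w^{(j)}\in\mathbb{R}^d$ denote the $j$-th row of $V$. Because $U'$ is uniformly random, after an irrelevant orthogonal change of coordinates on $\mathbb{R}^d$ the rows $w^{(j)}$ are close in distribution (for $d\ll n$) to i.i.d.\ $N(0,I_d/n)$, so I will reason as if this were exact. Setting $\beta=V^\top u$, the projector becomes $P'=VV^\top$, giving $(P'u)_j=\langle w^{(j)},\beta\rangle$ and hence $\lVert P'u\rVert_4^4=\sum_{j=1}^n\langle w^{(j)},\beta\rangle^4$. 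Since $\lVert u\rVert_2^2-\lVert\beta\rVert_2^2=\lVert(I-P')u\rVert_2^2\succcurlyeq 0$, also $\lVert u\rVert_2^4\succcurlyeq\lVert\beta\rVert_2^4$, so it suffices to prove $\sum_j\langle w^{(j)},\beta\rangle^4\preccurlyeq\lVert\beta\rVert^4/\mu'$ in $\mathbb{R}[\beta]$.

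I would then lift this degree-$4$ inequality in $\beta$ to a quadratic inequality in the rank-one symmetric matrix $Y=\beta\beta^\top$. Writing $W_j=w^{(j)}(w^{(j)})^\top$ and $M=\sum_j W_j\otimes W_j$, the left-hand side equals $\langle Y,MY\rangle$ while the right equals $(\mathrm{tr}\,Y)^2/\mu'=\langle Y,\mathrm{vec}(I)\mathrm{vec}(I)^\top Y\rangle/\mu'$. A Wick-formula computation for Gaussian rows gives $\bbE[M]=\tfrac1n(\mathrm{vec}(I)\mathrm{vec}(I)^\top+I_{d^2}+S)$, where $S$ is the swap on $\mathbb{R}^d\otimes\mathbb{R}^d$; in particular $\bbE\langle Y,MY\rangle=(3/n)\lVert\beta\rVert^4$, matching the target in expectation with $\mu'=n/3$. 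The only ``spiky'' piece of $\bbE[M]$ is the rank-one direction $\tfrac1n\mathrm{vec}(I)\mathrm{vec}(I)^\top$, which is exactly the Gram matrix of $\lVert\beta\rVert^4/n$; consequently a PSD Gram-matrix certificate of the SOS inequality can be extracted from any operator-norm bound of the form $\lVert M-\tfrac1n\mathrm{vec}(I)\mathrm{vec}(I)^\top\rVert_{\mathrm{op}}\le C/n$ on the symmetric subspace.

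Finally I would prove this operator-norm bound by matrix Bernstein (or a moment method) applied to the independent rank-one summands $X_j=(w^{(j)}\otimes w^{(j)})(w^{(j)}\otimes w^{(j)})^\top$. One has $\lVert X_j\rVert_{\mathrm{op}}=\lVert w^{(j)}\rVert^4\approx(d/n)^2$, and a careful variance computation yields $\sigma^2=O(d^2/n^3)$, so the resulting concentration error is $o(1/n)$ precisely when $d\ll\sqrt n$. The hardest step is this concentration: the per-summand norm $(d/n)^2$ exceeds the target accuracy $1/n$ by a factor of $d^2$, so an off-the-shelf Bernstein bound on the full $d^2\times d^2$ operator cannot succeed unless one first isolates the spiky direction $\mathrm{vec}(I)\mathrm{vec}(I)^\top$ of $\bbE[M]$ and then establishes sharper concentration on its orthogonal complement inside the symmetric subspace; this is where the hypothesis $d\ll\sqrt n$ becomes essential.
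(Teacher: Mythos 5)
Your proposal follows essentially the same route as the paper's proof outline: pass to the polynomial $\sum_j\langle w^{(j)},\beta\rangle^4$ for (approximately) i.i.d.\ Gaussian rows, compute the expected $d^2\times d^2$ coefficient matrix via Wick, and certify the sum-of-squares relation from a spectral-norm concentration bound around that expectation---you are simply filling in details the paper defers to~\cite{BarakBHKSZ12}. One small correction: the per-summand norm $(d/n)^2$ exceeds the target accuracy $1/n$ only by a factor $d^2/n\ll 1$, so it is not the obstruction you describe; the genuine reason to split off the $\mathrm{vec}(I)$ direction is that the variance proxy of the sum is $\Theta(d^3/n^3)$ there (giving only $d\ll n^{1/3}$ from an off-the-shelf bound) versus the $O(d^2/n^3)$ you correctly compute on its orthogonal complement.
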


\begin{proof}[Proof outline]
We can write $P' = B^\top B$ where $B$ is a $d\times n$ matrix whose rows are an orthogonal basis for the subspace $U'$.  Therefore, $P'u = B^\top x$ where $x = Bu$, and so to prove Lemma~\ref{lem:random-norm-bound}
it suffices to show that under these conditions, $\|B^{\top}x\|_4^4 \preccurlyeq O(\|x\|_2^4 / \|\mathds 1\|_2^4)$.
The matrix $B^{\top}$ will be very close to having random independent Gaussian entries,
and hence, up to scaling, $\|B^\top x\|_4^4$ will be (up to scaling), close to
$Q(x) = \tfrac{1}{n}\sum \langle w_i,x \rangle^4$ where $w_1,\ldots,w_d \in \mathbb{R}^d$ are chosen independently at random from
the standard Gaussian distribution.
The expectation of $\langle w,x \rangle^4$ is equal  $3\sum_{i,j} x_i^2x_j^2 = 3\|x\|_2^4$.   
Therefore, to prove the lemma, we need to show that for $n\gg d^2$, the polynomial $Q(x)$ is with high probability close to its expectation, in the sense that the $d^2\times d^2$ matrix corresponding to $Q$'s coefficients is close to its expectation in the spectral norm.
This follows from standard matrix concentration inequalities, see~\cite[Theorem~7.1\footnotemark]{BarakBHKSZ12}).
\end{proof}
\footnotetext{The reference is for the arxiv version \texttt{arXiv:1205.4484v2} of the paper.}

The following lemma is the second ingredient of the analysis of the algorithm.

\begin{lemma}
\label{lem:sparse-vector-recovery}
  Let $U'\subseteq \mathbb{R}^n$ be a linear subspace and let $P'$ be the projector into $U'$.
  Let $\cramped{x^{\scriptscriptstyle (0)}}\in\mathbb{R}^n$ be a $\mu$-sparse unit vector orthogonal to $U'$ and let $U=\Span\{\cramped{x^{\scriptscriptstyle (0)}}\}\oplus U'$ and $P$ the projector on $U$.
  Let $\{u\}$ be a degree-$4$ pseudodistribution that satisfies the constraints $\{ \lVert  u \rVert_2^2=1 \}$ and $\{ \lVert  Pu \rVert_4^4=1/\mu_0 \}$, where $\mu_0=\lVert \cramped{x^{\scriptscriptstyle (0)}} \rVert_2^4/\lVert  \cramped{x^{\scriptscriptstyle (0)}} \rVert_4^4\le \mu$.
  Suppose $\lVert  P' u \rVert_4^4\preccurlyeq \lVert  u \rVert_2^4/\mu'$ is a sum-of-squares relation in $\mathbb{R}[u]$.
  Then, $\{u\}$ satisfies
  \begin{displaymath}
    \pE \lVert  P' u \rVert_2^2\le 4\bigl(\tfrac{\mu}{\mu'}\bigr)^{1/4}\,.
  \end{displaymath}
\end{lemma}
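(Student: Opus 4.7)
The plan is to reduce the desired bound to a lower bound on $\pE\langle u, x^{(0)}\rangle^2$, and then derive the latter from the $\ell_4$-moment constraints via an SOS version of Minkowski's inequality. Let $a = \langle u, x^{(0)}\rangle$ and $v = P'u$, so that $Pu = ax^{(0)} + v$ (using $x^{(0)}\perp U'$). Since $P$ is an orthogonal projector, $\lVert u\rVert_2^2 - \lVert Pu\rVert_2^2 = \lVert (I-P)u\rVert_2^2 \succcurlyeq 0$, and combined with the constraint $\lVert u\rVert_2^2 = 1$ this yields the SOS relation $\lVert v\rVert_2^2 \preccurlyeq 1 - a^2$ (after expanding $\lVert Pu\rVert_2^2 = a^2 + \lVert v\rVert_2^2$ using $v\perp x^{(0)}$). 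Taking pseudoexpectations, it therefore suffices to prove $\pE a^2 \ge 1 - 4(\mu_0/\mu')^{1/4}$, since $\mu_0 \le \mu$.

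The heart of the proof is the following bivariate SOS inequality: for every $\varepsilon > 0$, with $C_\varepsilon = (1+\varepsilon)/((1+\varepsilon)^{1/3}-1)^3 = O(1/\varepsilon^3)$,
\begin{displaymath}
(\alpha+\beta)^4 \preccurlyeq (1+\varepsilon)\alpha^4 + C_\varepsilon\, \beta^4 \quad\text{in } \mathbb{R}[\alpha,\beta].
\end{displaymath}
The constant $C_\varepsilon$ is exactly the smallest value for which the right-hand side dominates the left pointwise (equality at $\beta = 1$, $\alpha = 1/((1+\varepsilon)^{1/3}-1)$); the difference is then a nonnegative binary quartic form, hence a sum of squares by Hilbert's theorem on nonnegative binary forms. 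Applying this relation coordinate-wise with $\alpha_i = a\,x^{(0)}_i$ and $\beta_i = v_i$ and summing over $i$ gives the SOS relation
\begin{displaymath}
\lVert Pu\rVert_4^4 \preccurlyeq (1+\varepsilon)\tfrac{a^4}{\mu_0} + C_\varepsilon\, \lVert v\rVert_4^4\,.
\end{displaymath}

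Taking pseudoexpectations and using $\pE\lVert Pu\rVert_4^4 = 1/\mu_0$ together with $\pE\lVert v\rVert_4^4 \le 1/\mu'$ (the latter following from the hypothesized $\lVert P'u\rVert_4^4 \preccurlyeq \lVert u\rVert_2^4/\mu'$ and $\pE\lVert u\rVert_2^4 = 1$, which is obtained by multiplying $\lVert u\rVert_2^2 - 1 = 0$ by $\lVert u\rVert_2^2$), we obtain
\begin{displaymath}
\pE a^4 \;\ge\; \frac{1 - C_\varepsilon\mu_0/\mu'}{1+\varepsilon}\,.
\end{displaymath}
Choosing $\varepsilon = 3(\mu_0/\mu')^{1/4}$ makes $C_\varepsilon\mu_0/\mu' = (\mu_0/\mu')^{1/4} + O((\mu_0/\mu')^{1/2})$, and a short calculation then gives $1 - \pE a^4 \le 4(\mu_0/\mu')^{1/4}$. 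The Lagrange identity provides the SOS bound $a^2 = \langle u, x^{(0)}\rangle^2 \preccurlyeq \lVert u\rVert_2^2\lVert x^{(0)}\rVert_2^2 = 1$, so the product $a^2(1-a^2) = a^2 - a^4$ is itself a sum of squares, yielding $\pE a^2 \ge \pE a^4 \ge 1 - 4(\mu_0/\mu')^{1/4}$, which combined with the reduction in the first paragraph completes the proof.

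The main obstacle is producing and pinning down the constant in the Minkowski-type SOS inequality; its dependence $C_\varepsilon \sim 27/\varepsilon^3$ on $\varepsilon$ is precisely what drives both the exponent $1/4$ and the multiplicative constant $4$ in the lemma. Existence of \emph{some} SOS certificate is immediate from Hilbert's theorem on nonnegative binary forms, but the sharp asymptotic requires writing the difference explicitly as $(\alpha - c_\varepsilon\beta)^2\, Q_\varepsilon(\alpha,\beta)$ with $c_\varepsilon = 1/((1+\varepsilon)^{1/3}-1)$ and $Q_\varepsilon$ a positive-definite binary quadratic form, and reading off its coefficients. Everything else is routine SOS bookkeeping using Cauchy--Schwarz in SOS form and the identity $\lVert u\rVert_2^2 = \lVert Pu\rVert_2^2 + \lVert(I-P)u\rVert_2^2$.
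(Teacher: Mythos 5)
Your proof is correct, but it replaces the paper's key lemma with a different one. The paper derives the lower bound on $\pE\alpha_0^4$ from a \emph{triangle inequality for the pseudoexpectation $\ell_4$ norm}, $\bigl(\pE\lVert u+v\rVert_4^4\bigr)^{1/4}\le\bigl(\pE\lVert u\rVert_4^4\bigr)^{1/4}+\bigl(\pE\lVert v\rVert_4^4\bigr)^{1/4}$, which it proves by expanding the quartic and bounding each cross-moment $\pE\,\bbE_i u_i^a v_i^b$ via a pseudoexpectation H\"older inequality; this gives $(\pE a^4)^{1/4}\ge 1-(\mu_0/\mu')^{1/4}$ and hence $\pE a^4\ge(1-t)^4\ge 1-4t$ with $t=(\mu_0/\mu')^{1/4}$ exactly. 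You instead certify a single pointwise binary SOS relation $(\alpha+\beta)^4\preccurlyeq\lambda^{-3}\alpha^4+(1-\lambda)^{-3}\beta^4$ (your $C_\varepsilon$ is exactly $(1-\lambda)^{-3}$ for $\lambda=(1+\varepsilon)^{-1/3}$), substitute linear forms coordinate-wise, and optimize the weight. The two routes are essentially dual: the lower envelope over $\varepsilon$ of your family of linear bounds is precisely $(1-t)^4$, attained at $\lambda=1-t$, i.e.\ $\varepsilon=(1-t)^{-3}-1$. Your argument has the advantage of needing only one explicit two-variable SOS identity (whose existence is immediate from Hilbert's theorem on binary forms) plus linearity of $\pE$, rather than a separate H\"older machinery for pseudoexpectations; the paper's version buys a reusable, norm-like statement. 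Two small points: first, your specific choice $\varepsilon=3t$ only yields $1-\pE a^4\le 3t/(1+3t)+t/(1-t)^3$, which is about $4.4t$ rather than $4t$, so to get the stated constant you should take the optimal $\varepsilon=(1-t)^{-3}-1$ (or observe the bound is vacuous for $t\ge 1/4$ and absorb the slack); second, ``$a^2-a^4$ is a sum of squares'' should be stated as $a^2(\lVert u\rVert_2^2-a^2)\succcurlyeq 0$ together with the constraint $\lVert u\rVert_2^2=1$ applied with the degree-$2$ multiplier $a^2$, which is exactly how the paper phrases it.
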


Note that the conclusion of Lemma~\ref{lem:sparse-vector-recovery} implies that a vector $u^*$ sampled from a Gaussian distribution with the same quadratic moments as the computed pseudodistribution also satisfies $\bbE_{u^*} \lVert  P' u^* \rVert_2^2\le 4(\mu/\mu')^{1/4}$ and $\bbE\lVert  u^\ast \rVert_2^2=1$.
By Markov inequality, $\lVert  u^*-\cramped{x^{\scriptscriptstyle (0)}}  \rVert_2^2\le 16(\mu/\mu')^{1/4}$ holds with probability at least $3/4$.
Since $u^\ast$ is Gaussian, it satisfies $\lVert u^\ast \rVert_2^2\ge 1/4$ with probability at least $1/2$.
If both events occur, which happens with probability at least $1/4$, then $\langle  u^\ast,\cramped{x^{\scriptscriptstyle (0)}} \rangle^2\ge (1-O(\mu/\mu'))\lVert  u^\ast \rVert_2^2$, thus establishing Theorem~\ref{thm:sparse-recovery}.

\paragraph{Proof of Lemma~\ref{lem:sparse-vector-recovery}}

There are many ways in which pseudodistributions behave like actual distributions, as far as low degree polynomials are concerned. To prove Lemma~\ref{lem:sparse-vector-recovery},
we need to establish the following two such results:

\begin{lemma}[H\"older's inequality for pseudoexpectation norms]
  Suppose $a$ and $b$ are nonnegative integers that sum to a power of $2$.
  Then, every degree-$(a+b)$ pseudodistribution $\{u,v\}$ satisfies
  \begin{displaymath}
    \pE \bbE_i u_i^a v_i^b \le \left(\pE \bbE_i u_i^{a+b}\right)^{a/(a+b)} \cdot \left(\pE \bbE_i v_i^{a+b}\right)^{b/(a+b)} \,.
  \end{displaymath}
\end{lemma}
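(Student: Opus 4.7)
The plan is to reinterpret the inequality as a Hölder inequality for a single auxiliary degree-$(a+b)$ pseudoexpectation in two formal indeterminates, and then prove that auxiliary Hölder inequality by iterating the Cauchy--Schwarz case.

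First, I would set $n = a+b$, introduce formal indeterminates $X, Y$, and define a linear functional $M$ on $\mathbb{R}[X,Y]_{\le n}$ by $M(F) := \pE\, \bbE_i F(u_i, v_i)$. I would verify that $M$ is itself a degree-$n$ pseudoexpectation on $\mathbb{R}[X,Y]$: the normalization $M(1) = 1$ is immediate, and for $F \in \mathbb{R}[X,Y]_{\le n/2}$ the polynomial $\bbE_i F(u_i, v_i)^2$ is a non-negative combination of squares of polynomials of degree at most $n/2$ in the indeterminates $u, v$, so $\pE$ evaluates it non-negatively. The lemma's inequality then reads exactly $M(X^a Y^b) \le M(X^n)^{a/n}\, M(Y^n)^{b/n}$.

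Next, I would prove by induction on $k$ where $n = 2^k$ the following statement: for any degree-$D$ pseudoexpectation $\tilde{\mathbb{E}}$ and any polynomials $R_1, \ldots, R_n$ with $n \cdot \max_j \deg R_j \le D$, one has $\tilde{\mathbb{E}}\, R_1 \cdots R_n \le \prod_{j=1}^n (\tilde{\mathbb{E}}\, R_j^n)^{1/n}$. The base case $k = 1$ is pseudo Cauchy--Schwarz $(\tilde{\mathbb{E}}\, R_1 R_2)^2 \le \tilde{\mathbb{E}}\, R_1^2 \cdot \tilde{\mathbb{E}}\, R_2^2$, which follows because $\lambda \mapsto \tilde{\mathbb{E}}(\lambda R_1 + R_2)^2$ is a non-negative quadratic in $\lambda$ and hence has non-positive discriminant. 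For the inductive step, I would apply Cauchy--Schwarz with $P = R_1 \cdots R_{n/2}$ and $Q = R_{n/2+1} \cdots R_n$ to get $\tilde{\mathbb{E}}\, R_1 \cdots R_n \le (\tilde{\mathbb{E}}\, P^2)^{1/2}(\tilde{\mathbb{E}}\, Q^2)^{1/2}$, and then invoke the induction hypothesis at level $n/2$ with the lifted polynomials $R_j^2$ (whose $(n/2)$-th powers coincide with $R_j^n$) to bound each of $\tilde{\mathbb{E}}\, P^2$ and $\tilde{\mathbb{E}}\, Q^2$ by a product of the $\tilde{\mathbb{E}}\, R_j^n$ with exponent $2/n$. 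Multiplying the two resulting bounds yields the claim. Applying the lemma to $M$ with $R_1 = \cdots = R_a = X$ and $R_{a+1} = \cdots = R_n = Y$ (all of degree $1$, so that the budget $n \cdot 1 \le n$ is satisfied) gives the desired inequality and completes the proof.

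The hard part will be formulating the induction hypothesis in a shape robust enough to survive one round of the recursion: squaring each $R_j$ doubles its degree, so the induction must allow polynomials of variable degree subject to a combined degree-budget constraint rather than a fixed per-factor bound. Once that is in place, the remaining bookkeeping---verifying that successive Cauchy--Schwarz applications stay within the pseudoexpectation's valid range and that the dyadic exponents $1/n, 2/n, \ldots$ combine correctly---is routine, as is the degenerate case where some $\pE\, \bbE_i u_i^n$ or $\pE\, \bbE_i v_i^n$ vanishes, which already forces $\pE\, \bbE_i u_i^a v_i^b = 0$ by the Cauchy--Schwarz base case alone.
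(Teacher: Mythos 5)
Your proposal is correct and follows essentially the same route as the paper's sketch: a pseudo-expectation Cauchy--Schwarz base case combined with a dyadic induction on the power-of-two exponent $a+b$. Packaging the quantity $\pE\,\bbE_i(\cdot)$ as an auxiliary two-indeterminate pseudoexpectation $M$ and proving the $n$-factor generalized H\"older bound is a clean (slightly more general) way of organizing the same induction, and your degree bookkeeping for the squared factors is the right way to make the inductive step rigorous.
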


\begin{proof}[Proof sketch]
The proof of the general case follows from the case $a=b=2$ by an inductive argument.
The proof for the case $a=b=1$ follows from the fact that the polynomial $\alpha \bbE_iu_i^2+\beta\bbE_iv_i^2-\sqrt{\alpha\beta}\bbE_iu_iv_i\in\mathbb{R}[u,v]$ is a sum of squares for all $\alpha,\beta\ge 0$ and choosing $\alpha=1/\pE \bbE_i u_i^2$ and $\beta=1/\pE\bbE_i v_i^2$
\end{proof}

\begin{lemma}[Triangle inequality for pseudodistribution $\ell_4$ norm]
  \label{lem:pseudo-hoelder}
  Let $\{u,v\}$ be a degree-$4$ pseudodistribution.
  Then,
  \begin{displaymath}
    \left(\pE \lVert  u+v \rVert_4^4\right)^{1/4} \le \left(\pE \lVert  u \rVert_4^4\right)^{1/4} + \left(\pE \lVert  v \rVert_4^4\right)^{1/4} \,.
  \end{displaymath}
\end{lemma}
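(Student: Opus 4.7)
The plan is to mimic the standard proof of Minkowski's inequality, replacing real-valued Hölder with the pseudoexpectation version just established. The starting point is the polynomial identity
$$\lVert u+v\rVert_4^4 \;=\; \sum_i (u_i+v_i)^3 u_i \;+\; \sum_i (u_i+v_i)^3 v_i,$$
which splits the target quantity into two symmetric halves, each a degree-$4$ polynomial in the indeterminates $(u,v)$, so both can be evaluated under $\pE$.

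Next, I would apply the Hölder lemma with exponents $(a,b) = (3,1)$ (note $a+b=4$ is a power of $2$, as required) to each half, using the pairs $(u+v, u)$ and $(u+v, v)$ in place of $(u, v)$ in the lemma. The main subtlety, and the only non-routine point in the whole argument, is justifying this substitution: given the degree-$4$ pseudodistribution $\{u,v\}$, one defines the pullback $\pE' q(w,z) := \pE q(u+v, z)$, which is again a valid degree-$4$ pseudoexpectation operator because $u+v$ is linear in the original indeterminates and any square polynomial in $(w,z)$ pulls back to a square polynomial in $(u,v)$, so positivity is preserved. With this in hand, the Hölder lemma delivers
$$\pE \sum_i (u_i+v_i)^3 u_i \;\le\; \bigl(\pE \lVert u+v\rVert_4^4\bigr)^{3/4} \bigl(\pE \lVert u\rVert_4^4\bigr)^{1/4},$$
together with its mirror image with $v$ in place of $u$.

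Summing the two inequalities and writing $A = \pE \lVert u+v\rVert_4^4$ and $B = \bigl(\pE\lVert u\rVert_4^4\bigr)^{1/4}+\bigl(\pE\lVert v\rVert_4^4\bigr)^{1/4}$, one obtains $A \le A^{3/4} B$. Dividing through by $A^{3/4}$ (the case $A=0$ is vacuous) yields $A^{1/4} \le B$, which is exactly the claim. The hard part, as noted, is the pullback/substitution step; the remaining work is a formal transcription of the classical Minkowski argument into the pseudoexpectation language, with the self-referential appearance of $\pE\lVert u+v\rVert_4^4$ on both sides of the Hölder bound being resolved by the final division.
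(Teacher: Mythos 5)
Your proof is correct, but it follows a different route from the paper's. The paper expands $\lVert u+v\rVert_4^4$ binomially into the five terms $\bbE_i u_i^a v_i^b$ with $a+b=4$, bounds each cross term by the pseudoexpectation H\"older inequality applied directly to the original indeterminates, and recognizes the resulting sum as $(A^{1/4}+B^{1/4})^4$. You instead transcribe the classical Minkowski argument: split $(u_i+v_i)^4$ as $(u_i+v_i)^3u_i+(u_i+v_i)^3v_i$, apply H\"older with exponents $(3,1)$ to the pairs $(u+v,u)$ and $(u+v,v)$, and divide by $\bigl(\pE\lVert u+v\rVert_4^4\bigr)^{3/4}$ at the end. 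The one step your route requires that the paper's avoids is the substitution argument, and you handle it correctly: the pullback operator $q\mapsto\pE\, q(u+v,u)$ is a valid degree-$4$ pseudoexpectation because the substitution is linear, so it preserves degree and maps squares to squares (minor slip: you wrote $\pE q(u+v,z)$ where you mean $\pE q(u+v,u)$, and likewise with $v$ for the second half). What each approach buys: the paper's stays entirely inside the original polynomial ring and needs no substitution lemma, at the cost of invoking H\"older for all four cross exponents; yours invokes H\"older only once per half and is the argument that would generalize to $\ell_p$ for other even $p$ without tracking all binomial terms, at the cost of the (routine but necessary) pullback justification and the self-referential division, whose degenerate case $A=0$ you correctly dispose of.
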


\begin{proof}
  The inequality is invariant with respect to the measure used for the inner norm $\lVert  \cdot \rVert_4$.
  For simplicity, suppose $\lVert  x \rVert_4^4=\bbE x_i^4$.
  Then, $\lVert  u+v \rVert_4^4=\bbE_iu_i^4 + 4 \bbE_i u_i^3v_i + 6\bbE_iu_iv_i^3+\bbE_iv_i^4$.
  Let $A=\pE \bbE_iu_i^4$ and $B=\pE \bbE_iv_i^4$.
  Then, Lemma~\ref{lem:pseudo-hoelder} allows us to bound the pseudoexpectations of the terms $\bbE_iu_i^av_i^b$, so that as desired
  \begin{displaymath}
    \pE \lVert  u+v \rVert_4^4\le A + 4 A^{3/4} B^{1/4} + 6 A^{1/2}B^{1/2} + 4 A^{1/3}B^{3/4} + B = (A^{1/4}+B^{1/4})^4\,.\qedhere
  \end{displaymath}
\end{proof}

We can now prove Lemma~\ref{thm:sparse-recovery}.
Let $\alpha_0=\langle  u,{}\cramped{x^{\scriptscriptstyle (0)}} \rangle\in\mathbb{R}[u]$.
By construction, the polynomial identity $\lVert Pu\rVert_4^4 = \lVert \alpha_0 {}\cramped{x^{\scriptscriptstyle (0)}} + P' u\rVert_4^4$ holds over $\mathbb{R}[u]$.
By the triangle inequality for pseudodistribution $\ell_4$ norm,
for $A=\pE \alpha_0^4 \lVert  {}\cramped{x^{\scriptscriptstyle (0)}} \rVert_4^4$ and $B=\pE \lVert  P'u \rVert_4^4$
\begin{displaymath}
\cramped{\bigl( \tfrac1{\mu_0} \bigr)^{1/4}}=\bigl(\pE \lVert  P u \rVert_4^4\bigr)^{1/4} \le A^{1/4} + B^{1/4}
\end{displaymath}
By the premises of the lemma, $A= \pE \alpha_0^4/\mu_0$ and $B\le 1/\mu'$.
Together with the previous bound, it follows  that $(\pE \alpha_0^4)^{1/4}\ge 1-(\mu_0/\mu')^{1/4}$.
Since $\alpha_0^2\preccurlyeq \lVert  u \rVert_2^2$ and $\{u\}$ satisfies $\lVert  u \rVert_2^2=1$, we have $\pE \alpha_0^2 \ge \pE \alpha_0^4\ge 1-4(\mu_0/\mu')^{1/4}$.
Finally, using $\lVert  u-\cramped{x^{\scriptscriptstyle (0)}} \rVert_2^2=\lVert  u \rVert_2^2-\alpha_0^2$, we derive the desired bound $\pE \lVert  u-\cramped{x^{\scriptscriptstyle (0)}} \rVert_2^2=1-\pE\alpha_0^2\le 4(\mu_0/\mu')^{1/4}$
thus establishing Lemma~\ref{lem:pseudo-hoelder} and Theorem~\ref{thm:sparse-recovery}. \qed

\subsection{Sparse dictionary learning}

A \emph{$\kappa$-overcomplete dictionary} is a matrix $A\in \mathbb{R}^{n\times m}$ with $\kappa=m/n\ge 1$ and isotropic unit vectors as columns (so that $\lVert  A^\top u \rVert_2^2=\kappa \lVert  u \rVert_2^2$).
We say a distribution $\{x\}$ over $\mathbb{R}^m$ is $(d,\tau)$-nice if it satisfies $\bbE_i x_i^d=1$ and $\bbE_i x_i^{d/2} x_j^{d/2}\le \tau$ for all $i\neq j\in [m]$, and it satisfies that non-square monomial degree-$d$ moments vanish so that $\bbE x^\alpha =0$ for all non-square degree-$d$ monomials $x^\alpha$,
where $x^\alpha = \prod x_i^{\alpha_i}$ for $\alpha \in \mathbb{Z}^n$.
For $d=O(1)$ and $\tau = o(1)$, a nice distribution satisfies that $\bbE \tfrac{1}{m}\sum_i x_i^4 \gg \left( \tfrac{1}{m}\sum_i x_i^2 \right)^2$ which means that it is approximately sparse in the sense that the square of the entries of $x$ has large variance (which means that few of the entries have very big magnitude compared to the rest).

\begin{theorem}

For every $\varepsilon>0$ and $\kappa\ge 1$, there exists $d$ and $\tau$ and a quasipolynomial-time algorithm algorithm for \textsc{sparse dictionary learning} with the following guarantees:
Suppose the input consists of $n^{O(1)}$ independent samples\footnote{Here, we also make the mild assumption that the degree-$2d$ moments of $x$ are bounded by $n^{O(1)}$.} from a distribution $\{y=Ax\}$ over $\mathbb{R}^n$, where $A\in\mathbb{R}^{n\times m}$ is a $\kappa$-overcomplete dictionary and the distribution $\{x\}$ over $\mathbb{R}^m$ is $(d,\tau)$-nice.
Then, with high probability, the algorithm outputs a set of vectors with Hausdorff distance\footnote{
  The \emph{Hausdorff distance} between two sets of vectors upper bounds the maximum distance of a point in one of the sets to its closest point in the other set.
  Due to the innate symmetry of the sparse dictionary problem (replacing a column $\cramped{a^{\scriptscriptstyle (i)}}$ of $A$ by $-\cramped{a^{\scriptscriptstyle (i)}}$ might not affect the input distribution), we measure the Hausdorff distance after symmetrizing the sets, i.e., replacing the set $S$ by $S\cup -S$.
}
at most $\varepsilon$ from the set of columns of $A$.

\end{theorem}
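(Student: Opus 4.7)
The plan is to follow the recipe of Section~\ref{sec:pseudo-dist}: encode dictionary recovery as a polynomial optimization whose intended solutions are the columns of $A$, use a sufficiently high-degree SOS algorithm to compute a pseudodistribution over unit vectors concentrated near those columns, and round to extract all $m$ columns. The starting point is to exploit $(d,\tau)$-niceness to decompose the degree-$d$ moment polynomial $T(u)=\bbE\langle u,y\rangle^d = \bbE\langle A^\top u, x\rangle^d$ in terms of the columns $a^{(i)}$ of $A$. Expanding $\langle A^\top u, x\rangle^d$ by the multinomial theorem and using that non-square monomial moments of $x$ vanish, only the pure terms $\langle u, a^{(i)}\rangle^d$ and controlled cross-terms of the form $\langle u, a^{(i)}\rangle^{d/2}\langle u, a^{(j)}\rangle^{d/2}$ (bounded by $\tau$) survive. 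From the $n^{O(1)}$ samples and the moment bound on $x$, standard concentration gives a degree-$d$ empirical polynomial $\hat T(u)$ uniformly close to $T(u)$ on the unit sphere.

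Second, I would set up the polynomial system $\mathcal{E}=\{\lVert u \rVert_2^2 = 1,\ \hat T(u) \geq 1-\eta\}$ for $\eta=\eta(\varepsilon,\kappa)$ small, and apply the degree-$\ell$ SOS algorithm with $\ell=\ell(d,\varepsilon,\kappa)$ large enough to produce a pseudodistribution $\{u\}$ satisfying $\mathcal{E}$. The key structural step I would need is a dictionary analog of Lemma~\ref{lem:sparse-vector-recovery}: a low-degree SOS certificate that any unit $u$ achieving $\hat T(u)$ near its maximum satisfies $\sum_i \langle u, a^{(i)}\rangle^d \geq 1-O(\varepsilon)$, and moreover that this mass is effectively concentrated on a single index $i$. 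The ingredients available to the SOS proof are isotropy, $\lVert A^\top u\rVert_2^2 = \kappa\lVert u\rVert_2^2$, the $\ell_d$-vs-$\ell_2$ comparison implied by niceness, and the pseudoexpectation H\"older and triangle inequalities used in the sparse-vector analysis.

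For rounding, I would mimic the Gaussian-rounding scheme from Section~\ref{sec:sparsest-cut:proof}. Draw a random Gaussian reference vector $g\in\mathbb{R}^n$ and form the low-degree matrix $M_g$ with entries $(M_g)_{jk} = \pE \langle u,g\rangle^{d-2} u_j u_k$; a direct calculation under the structural lemma above should show that the top eigenvector of $M_g$ is close to the column $a^{(i)}$ maximizing $\lvert \langle g, a^{(i)}\rangle\rvert$, with the ``winning'' index distributed nontrivially over $[m]$. Repeating the rounding with $O(m\log m)$ independent Gaussians and clustering the outputs should then, by a coupon-collector argument, cover the full column set up to Hausdorff distance $\varepsilon$ with high probability.

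The hard part will be the structural lemma in the second paragraph. In sparse vector recovery there is a \emph{unique} planted direction, so the analysis only has to certify contraction toward it. Here, by contrast, the polynomial $\hat T$ has $m=\kappa n$ roughly equally good maximizers, and the SOS proof must establish a uniqueness-up-to-symmetry statement akin to identifiability in tensor decomposition, while using only low-degree SOS reasoning and the combinatorial budget provided by $\tau$. Calibrating $d$ and $\tau$ so that this certificate goes through, and then iterating the Gaussian rounding over all $m$ columns, is what forces the degree $\ell$ and hence the running time up to quasipolynomial rather than polynomial.
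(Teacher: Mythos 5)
Your setup matches the paper's: the empirical degree-$d$ moment polynomial $P(u)=\tfrac1R\sum_i\langle y^{(i)},u\rangle^d$, the two-sided SOS sandwich $\lVert A^\top u\rVert_d^d\pm\tau\lVert u\rVert_2^d$ coming from niceness plus matrix concentration, the constraints $\{\lVert u\rVert_2^2=1,\ P(u)\approx 1\}$, and the pointwise fact that any unit vector nearly maximizing $P$ is close to some column. The gap is in the isolation/rounding step. First, your ``structural lemma'' asks that the mass be concentrated on a single index, but this can only hold pointwise for unit vectors: the pseudodistribution returned by the SOS algorithm may legitimately be the genuine uniform distribution over $\{\pm a^{(i)}\}_{i\in[m]}$, which satisfies all the constraints, so no consequence of the constraints alone can localize it. Second, the fix you propose --- forming $M_g=\pE\langle u,g\rangle^{d-2}uu^\top$ for a single Gaussian $g$ and taking its top eigenvector, i.e., reweighting by the square polynomial $\langle u,g\rangle^{d-2}$ --- does not supply enough contrast: for the uniform distribution over the columns, the largest weight $\langle g,a^{(i)}\rangle^{d-2}$ exceeds a typical weight by only a $\mathrm{polylog}(m)$ factor when $d=O(1)$, while the remaining $m-1=\Theta(\kappa n)$ columns contribute total weight $\Theta(m)$. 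The reweighted second-moment matrix is then dominated by a near-isotropic bulk; arguing that its top eigenvector nonetheless aligns with the best column requires concentration over the support that uses independence of the weights, an argument with no evident low-degree SOS proof and hence unavailable once the object is merely a pseudodistribution. This is precisely the step where ``pretend it is an actual distribution'' must be backed by an SOS certificate.

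The paper's Lemma~\ref{lem:isolate} resolves this by reweighting with $W^2$ for $W$ a product of $\Theta(\log n)$ \emph{independent} random linear forms: each factor gives the best-correlated column a constant multiplicative advantage, and compounding over $\Theta(\log n)$ factors yields contrast $n^{\Omega(1)}\gg m$, enough so that the reweighted operator $\pE_{\mathcal{D}'}\colon Q\mapsto \pE W^2Q/\pE W^2$ (still a valid pseudoexpectation) satisfies $\pE_{\mathcal{D}'}\bigl(\lVert u\rVert^2-\langle a^{(i)},u\rangle^2\bigr)\le\varepsilon$; ordinary Gaussian sampling from its quadratic moments, as in Section~\ref{sec:sparsest-cut:proof}, then finishes, and the columns are recovered one by one. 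This is also what forces the pseudodistribution degree, and hence the running time, to be quasipolynomial; note that your rounding as written touches only degree-$O(d)$ moments, which is inconsistent with your own (correct) closing remark that the degree must grow. The missing technical heart is thus the low-degree SOS version of the isolation argument, not the encoding or the coupon-collector outer loop.
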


\paragraph{Encoding as a system of polynomial equations.}
Let $\cramped{y^{\scriptscriptstyle (1)}},\ldots,\cramped{y^{\scriptscriptstyle (R)}}$ be independent samples from the distribution $\{y=Ax\}$.
Then, we consider the polynomial $P=\tfrac 1 R \sum_i \langle\cramped{y^{\scriptscriptstyle (i)}}, u \rangle^d\in\mathbb{R}[u]_d$.
Using the properties of nice distributions, a direct computation shows that with high probability $P$ satisfies the relation
\begin{displaymath}
  \lVert  A^\top u \rVert_d^d -\tau \lVert  u \rVert_2^d\preccurlyeq P\preccurlyeq \lVert  A^\top u\rVert_d^d+\tau\lVert  u \rVert_2^d\,.
\end{displaymath}
(Here, we are omitting some constant factors, depending on $d$, that are not important for the following discussion.)
It follows that $P(\cramped{a^{\scriptscriptstyle (i)}})=1\pm \tau$ for every column $\cramped{a^{\scriptscriptstyle (i)}}$ of $A$.
It's also not hard to show that every unit vector $a^\ast$ with $P(a^\ast)\approx 1$ is close to one of the columns of $A$.
(Indeed, every unit vector satisfies $P(a^\ast)\le \max_i \langle \cramped{a^{\scriptscriptstyle (i)}},a^\ast \rangle^{d-2}\kappa+\tau$.
Therefore, $P(a^\ast)\approx 1$ implies that $\langle  \cramped{a^{\scriptscriptstyle (i)}},a^\ast \rangle^2 \ge \kappa^{-\Omega(1/d)}$, which is close to $1$ for $d\gg \log \kappa$.)
What we will show is that pseudodistributions of degree $O(\log n)$ allow us to find all such vectors.

\paragraph{Why does the sum-of-squares method work?}

In the following, $\varepsilon>0$ and $\kappa\ge 1$ are arbitrary constants that determine constants $d=d(\varepsilon,\kappa)\ge 1$ and $\tau=\tau(\varepsilon,\kappa)>0$ (as in the theorem).

\begin{lemma}
\label{lem:isolate}
  Let $P\in\mathbb{R}[u]$ be a degree-$d$ polynomial with $\pm(P-\lVert  A^\top u \rVert_d^d)\preccurlyeq \tau \lVert  u \rVert_2^d$ for some $\kappa$-overcomplete dictionary $A$.
  Let $\mathcal{D}$ be a degree-$O(\log n)$ pseudodistribution that satisfies the constraints $\{ \lVert  u \rVert_2^2=1 \}$ and $\{ P(u)=1-\tau \}$.
  Let $W\in\mathbb{R}[u]$ be a product of $O(\log n)$ random linear forms\footnote{Here, a random linear form means a polynomial $\langle  u,v \rangle\in\mathbb{R}[u]$ where $v$ is a random unit vector in $\mathbb{R}^n$.}.
  Then, with probability at least $n^{-O(1)}$ over the choice of $W$, there exists a column $\cramped{a^{\scriptscriptstyle (i)}}$ of $A$ such that
  \begin{displaymath}
    \tfrac 1 {\pE_\mathcal{D} W^2} \pE_\mathcal{D} W^2 \cdot \left( \lVert  u\rVert^2 - \langle  \cramped{a^{\scriptscriptstyle (i)}},u  \rangle^2 \right) \le \varepsilon
    \,.
  \end{displaymath}
\end{lemma}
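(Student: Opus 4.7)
The plan is to exploit the dual (pseudoexpectation) form of the SOS algorithm: since $W^2$ is a sum of squares, reweighting via $Q\mapsto \pE_\mathcal{D} W^2 Q/\pE_\mathcal{D} W^2$ yields another valid pseudoexpectation operator of lower degree. Under the constraint $\|u\|^2=1$, the desired conclusion is equivalent to producing an $i^\ast$ with $\pE_\mathcal{D} W^2 \langle a^{(i^*)}, u\rangle^2 \ge (1-\varepsilon)\,\pE_\mathcal{D} W^2$, so my goal reduces to showing that the reweighted pseudodistribution places almost all of its ``mass'' on a single column direction with probability $n^{-O(1)}$ over $W$.

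First I would translate the hypothesis on $P$ into pseudoexpectation inequalities. Multiplying the given SOS bound $\pm(P-\|A^\top u\|_d^d)\preccurlyeq \tau\|u\|^d$ by the SOS polynomial $W^2$, taking $\pE_\mathcal{D}$, and using the constraints $P=1-\tau$ and $\|u\|^2=1$ (which gives $\pE_\mathcal{D} W^2 \|u\|^d=\pE_\mathcal{D} W^2$), I obtain $\sum_i \pE_\mathcal{D} W^2\langle a^{(i)}, u\rangle^d = (1\pm O(\tau))\pE_\mathcal{D} W^2$. Next, iterating the basic SOS inequality $\|u\|^2-\langle a^{(i)}, u\rangle^2\succcurlyeq 0$ (valid since $\|a^{(i)}\|=1$) gives the SOS Cauchy--Schwarz relation $\langle a^{(i)}, u\rangle^d\preccurlyeq \langle a^{(i)}, u\rangle^2\|u\|^{d-2}$, and combining this with $\|u\|^2=1$ yields $\sum_i \pE_\mathcal{D} W^2\langle a^{(i)}, u\rangle^2\ge (1-O(\tau))\pE_\mathcal{D} W^2$. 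Thus the quantities $p_i^W:=\pE_\mathcal{D} W^2\langle a^{(i)}, u\rangle^2/\pE_\mathcal{D} W^2$ are nonnegative and sum to at least $1-O(\tau)$.

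The main obstacle is showing that for random $W=\prod_{k=1}^L \langle u,v_k\rangle$ with $L=\Theta(\log n)$ i.i.d.\ random unit vectors $v_k$, the vector $(p_i^W)$ concentrates on a single index with probability $n^{-O(1)}$. The heuristic is clearest if $\mathcal{D}$ were an actual mixture of point masses at $\pm a^{(i)}$ with weights $p_i$: then $\pE_\mathcal{D} W^2\langle a^{(i)}, u\rangle^2$ would be dominated (thanks to the $\kappa$-overcompleteness of $A$, which forces $\sum_{j\ne i}\langle a^{(i)}, a^{(j)}\rangle^2$ to be $O(1)$ and typically spread out) by the diagonal term $p_i W^2(a^{(i)})=p_i\prod_k \langle a^{(i)}, v_k\rangle^2$. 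For each fixed $i$, the logarithm of this product is a sum of $L$ i.i.d.\ log-$\chi^2$-type random variables with mean $\approx -L\log n$ and variance $\Theta(L)$, so by a sub-exponential large deviation bound the probability that some ``champion'' index $i^\ast$ has $\log W^2(a^{(i^\ast)})$ exceed every other $\log W^2(a^{(j)})$ by $\Omega(\log n)$ is $n^{-\Theta(1)}$ for $L=\Theta(\log n)$. Upgrading this classical picture to pseudodistributions requires controlling the cross-terms $\pE_\mathcal{D} W^2\langle a^{(i^*)}, u\rangle^2\langle a^{(j)}, u\rangle^2$ via SOS inequalities that exploit the near-orthogonality of distinct columns, preventing the pseudodistribution from ``cheating'' the intuition above.

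Combining the pieces: on the polynomial-probability event produced above, there is $i^\ast$ with $p_{i^*}^W\ge 1-\varepsilon/2$, whence $\pE_\mathcal{D} W^2(\|u\|^2-\langle a^{(i^*)}, u\rangle^2)\le (\varepsilon/2+O(\tau))\,\pE_\mathcal{D} W^2\le \varepsilon\,\pE_\mathcal{D} W^2$ for $\tau$ chosen small enough relative to $\varepsilon$ and $\kappa$, which is the stated bound. Positivity of $\pE_\mathcal{D} W^2$ on this event follows from $\bbE_W \pE_\mathcal{D} W^2 = \pE_\mathcal{D} \bbE_W W^2 = n^{-L}\pE_\mathcal{D}\|u\|^{2L}=n^{-L}$ together with a tail bound, so the normalization step is legitimate.
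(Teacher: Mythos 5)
The paper itself omits the proof of this lemma, offering only the two-line idea (SOS-ize the argument that near-maximizers of $P$ are close to columns, then isolate one column by reweighing with a product of squared random linear forms), and your outline follows exactly that strategy. Your preliminary reductions are sound: reweighting by the SOS polynomial $W^2$ does give a valid pseudoexpectation; multiplying the hypothesis $\pm(P-\lVert A^\top u\rVert_d^d)\preccurlyeq \tau\lVert u\rVert_2^d$ by $W^2$ and using the constraints does yield $\sum_i \pE_\mathcal{D} W^2\langle a^{(i)},u\rangle^d=(1\pm O(\tau))\pE_\mathcal{D} W^2$; and the SOS relation $\langle a^{(i)},u\rangle^d\preccurlyeq \langle a^{(i)},u\rangle^2\lVert u\rVert^{d-2}$ correctly converts this into $\sum_i p_i^W\ge 1-O(\tau)$.

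However, there is a genuine gap at the one place where the lemma is actually hard. Everything up to that point would hold for \emph{any} SOS polynomial in place of $W^2$ and says nothing about isolation; the entire content of the lemma is that for a random product of $\Theta(\log n)$ linear forms, with probability $n^{-O(1)}$ a single index captures $1-\varepsilon$ of the mass. Your justification of this step is a heuristic for the case that $\mathcal{D}$ is an actual mixture of point masses at $\pm a^{(i)}$, and the passage to pseudodistributions is dispatched with the sentence that one must ``control the cross-terms via SOS inequalities that exploit the near-orthogonality of distinct columns.'' But a pseudodistribution has no support, so the decomposition into diagonal and cross terms that your heuristic relies on is not even defined; one must first produce a low-degree SOS certificate that any $u$ with $\lVert u\rVert^2=1$ and $P(u)\approx 1$ essentially lies near some column (the first half of the paper's stated idea, which your outline skips entirely), and then show that the large-deviation/champion argument itself admits a low-degree SOS proof so that it survives reweighting of a mere pseudodistribution. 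Neither of these is exhibited, and no candidate SOS inequality bounding $\pE_\mathcal{D} W^2\langle a^{(i)},u\rangle^2\langle a^{(j)},u\rangle^2$ is written down. So the proposal correctly identifies the paper's route and carries out the easy bookkeeping, but the core of the argument remains unproved.
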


\noindent
If $\pE_\mathcal{D}$ is a pseudoexpectation operator, then $\pE_{\mathcal{D}'}\colon P\mapsto \pE W^2 P / \pE W^2$ is also a pseudoexpectation operator (as it satisfies linearity, normalization, and nonnegativity).
(This transformation corresponds to reweighing the pseudodistribution $\mathcal{D}$ by the polynomial $W^2$.)
Hence, the conclusion of the lemma gives us a new pseudodistribution $\mathcal{D}'$ such that $\pE_{\mathcal{D}'} \lVert  u \rVert_2^2 - \langle\cramped{a^{\scriptscriptstyle (i)}},u\rangle^2 \le \varepsilon$.
Therefore, if we sample a Gaussian vector $a^\ast$ with the same quadratic moments as $\mathcal{D}'$, it satisfies $\lVert  a^\ast \rVert_2^2-\langle \cramped{a^{\scriptscriptstyle (i)}}, a^*\rangle^2\le 4 \varepsilon$ with probability at least $3/4$.
At the same time, it satisfies $\lVert  a^\ast \rVert^2\ge 1/4$ with probability at least $1/2$.
Taking these bounds together, $a^\ast$ satisfies $\langle  \cramped{a^{\scriptscriptstyle (i)}},a^\ast \rangle^2\ge (1-16 \varepsilon)\lVert  a^\ast \rVert^2$ with probability at least $1/4$.

Lemma~\ref{lem:isolate} allows us to reconstruct one of the columns of $A$.
Using similar ideas, we can iterate this argument and recover one-by-one all columns of $A$.
We omit the proof of Lemma~\ref{lem:isolate}, but the idea behind it is to first give an SOS proof version of our argument above that maximizers of $P$ must be close to one of the $a^{(i)}$'s.
We then note that if a distribution $\mathcal{D}$ is supported  (up to noise) on at most $m$ different vectors, then we can essentially isolate one of these vectors by re-weighing $\mathcal{D}$ using the product of the squares of $O(\log m)$ random linear forms.
  
It turns out, this latter argument has a low degree SOS proof as well, which means that in our case that given  $\mathcal{D}$ satisfying the constraint $\{ P(u)=1-\tau \}$, we can isolate one of the $\cramped{a^{\scriptscriptstyle (i)}}$'s even when $\mathcal{D}$ is not an actual distribution but merely a pseudodistribution.


\section{Hypercontractive norms and small-set expansion}
\label{sec:SSEvsSOS}

So far we have discussed the Small-Set Expansion Hypothesis and the Sum of Squares algorithm.
We now discuss how these two notions are related.
One connection, mentioned before, is that the SSEH predicts that in many settings the guarantees of the degree-$2$ SOS algorithm are best possible, and so in particular it means that going from degree $2$ to say degree $100$ should not give any substantial improvement in terms of guarantees.
Another, perhaps more meaningful connection is that there is a candidate approach for refuting the SSEH using the SOS algorithm.
At the heart of this approach is the following observation:

\begin{quote}
\emph{The small-set expansion problem is a special case of the problem of finding ``sparse'' vectors in a linear subspace.}
\end{quote}

This may seem strange, as a priori, the following two problem  seem completely unrelated:
\textbf{(i)} Given a graph $G=(V,E)$, find a ``small'' subset $S\subseteq V$ with low expansion $\phi_G(S)$,
and \textbf{(ii)} Given a subspace $W\subseteq \mathbb{R}^n$, find a ``sparse'' vector in $W$.
The former is a combinatorial problem on graphs, and the latter a geometric problem on subspaces.
However, for the right notions of ``small'' and ``sparse'', these turn out to be essentially the same problem.
Intuitively, the reason is the following:
the expansion of a set $S$ is proportional to the quantity $x^\top L x$ where $x$ is the characteristic vector of $S$ (i.e. $x_i$ equals $1$ if $i\in S$ and equals $0$ otherwise),
and $L$ is the \emph{Laplacian matrix} of $G$ (defined as $L=I-d^{-1}A$ where $I$ is the identity, $d$ is the degree, and $A$ is $G$'s adjacency matrix).
Let $v_1,\ldots,v_n$ be the eigenvectors of $L$ and $\lambda_1 , \ldots, \lambda_n$ the corresponding eigenvalues.
Then $x^\top L x = \sum_{i=1}^n \lambda_i \langle v_i,x \rangle^2.$

Therefore if $x^\top L x$ is smaller than $\varphi\|x\|^2$ and $c$ is a large enough constant, then  most of the mass of $x$ is contained
in the subspace $W = \mathrm{Span}\{ v_i : \lambda_i \leq c\varphi \}$.
Since $S$ is small, $x$ is sparse, and so we see that there is a sparse vector that is ``almost'' contained in $W$.
Moreover, by projecting $x$ into $W$ we can also find a ``sparse'' vector that is actually contained in $W$,
if we allow a slightly softer notion of ``sparseness'', that instead of stipulating that most coordinates are zero, only requires that the distribution
of coordinates is very ``spiky'' in the sense that most of its mass is dominated by the few ``heavy hitters''.

Concretely, for $p>1$ and $\delta\in(0,1)$, we say that a vector $x\in\mathbb{R}^n$ is \emph{$(\delta,p)$-sparse} if $\bbE_i x_i^{2p} \geq \delta^{1-p}(\bbE_i x_i^2)^p$.
Note that a characteristic vector of a set of measure $\delta$ is $(\delta,p)$-sparse for any $p$.
The relation between small-set-expansion and finding sparse vectors in a subspace is captured by the following theorem:

\begin{theorem}[Hypercontractivity and small-set expansion~\cite{BarakBHKSZ12}, informal statement]\label{thm:SSEvsNorms}
Let $G=(V,E)$ be a $d$-regular graph with Laplacian $L$. Then for every  $p\geq 2$ and $\varphi \in (0,1)$,
\begin{enumerate}
\item \emph{(Non-expanding small sets imply sparse vectors.)} If there exists  $S\subseteq V$ with $|S|=o(|V|)$  and  $\phi_G(S) \leq \varphi$
then there exists an $(o(1),p)$-sparse vector  $x\in W_{\leq \varphi + o(1)}$ where for every $\lambda$, $W_{\leq \lambda}$ denotes the span of the eigenvectors of $L$
with eigenvalue smaller than $\lambda$.

\item \emph{(Sparse vectors imply non-expanding small sets.)}
If there exists a $(o(1),p)$-sparse vector $x\in W_{\leq \varphi}$, then there exists $S\subseteq V$ with $|S| = o(|V|)$
and $\phi_G(S) \leq \rho$ for some constant $\rho<1$ depending on $\varphi$.
\end{enumerate}
\end{theorem}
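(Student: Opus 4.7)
The plan is to prove the two directions separately, using the identity $x^\top L x = \sum_i \lambda_i \langle v_i, x \rangle^2$ as the bridge between graph expansion and spectral structure. The key initial observation is that for $x = \mathds 1_S$, one has $x^\top L x = \phi_G(S)\|x\|_2^2$, so ``$\phi_G(S)$ is small'' is the same as ``most of the spectral mass of $\mathds 1_S$ sits on small eigenvalues of $L$''. The sparsity parameter $(\delta,p)$ is chosen precisely so that a characteristic vector $\mathds 1_S$ with $|S|/|V|=\delta$ sits exactly on the boundary of being $(\delta,p)$-sparse, which makes both directions a matter of showing that one can trade $\ell_2$ and $\ell_{2p}$ mass back and forth between combinatorial sets and vectors in $W_{\le \varphi}$.

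For direction (1), I would start from $x = \mathds 1_S$ and set $y = Px$, where $P$ is the orthogonal projection onto $W_{\le \varphi+\eta}$ for some $\eta = o(1)$ to be chosen. Writing $z = x-y$ and using $x^\top L x \le \varphi \|x\|_2^2$ together with the fact that $z$ is supported on eigenvalues exceeding $\varphi+\eta$, one gets $\|z\|_2^2 \le \tfrac{\varphi}{\varphi+\eta}\|x\|_2^2$, so for suitable $\eta$ the residual satisfies $\|z\|_2^2 = o(\|x\|_2^2)$. The task then is to show $y$ is $(o(1),p)$-sparse, i.e.\ that $\|y\|_{2p}^{2p}/\|y\|_2^{2p}$ is large. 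Since $\|y\|_2 \le \|x\|_2$, it suffices to lower bound $\|y\|_{2p}$ by bounding $\|z\|_{2p}$: interpolating via $\|z\|_{2p}^{2p} \le \|z\|_\infty^{2p-2}\|z\|_2^2$ and then bounding $\|z\|_\infty$ (using the fact that $\mathds 1_S$ has $\ell_\infty$-norm $1$ and $y$ is a spectral projection) should give $\|z\|_{2p}^{2p} \ll \|x\|_{2p}^{2p}=\delta$. Combining this with $\|y\|_2 \le \|x\|_2$ gives the required lower bound on $\|y\|_{2p}/\|y\|_2$.

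For direction (2), I would run a Cheeger-style sweep cut on $x$, but read off both the expansion and the cardinality bound from the sparsity. Concretely, using $x \in W_{\le \varphi}$, there exists a threshold $t$ such that the level set $S_t=\{i:|x_i|>t\}$ satisfies $\phi_G(S_t)\le \rho$ for some $\rho<1$ depending on $\varphi$ (this is the standard threshold-cut computation applied to $x$, noting that $\rho$ need not be the $O(\sqrt{\varphi})$ Cheeger bound since we only want $\rho<1$). To simultaneously control $|S_t|$, I would exploit the $(o(1),p)$-sparsity: because $\bbE_i x_i^{2p} \ge \delta^{1-p}(\bbE_i x_i^2)^p$ with $\delta=o(1)$, nearly all of the $\ell_{2p}^{2p}$ mass lies on a tiny set of ``heavy'' coordinates, so choosing $t$ so that $S_t$ sits inside this heavy-coordinate set forces $|S_t| = o(|V|)$. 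A short averaging argument over $t$ in an appropriate range should give a single threshold that achieves both properties at once.

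The main obstacle, and where the word ``hypercontractivity'' does the real work, is tracking how the $\ell_{2p}$ norm interacts with spectral projections in direction (1) (and symmetrically with sweep cuts in direction (2)). The issue is that $\ell_{2p}$ is very sensitive to outliers, so the crude bound $\|z\|_2^2 = o(\|x\|_2^2)$ is not enough on its own to prove $\|z\|_{2p}^{2p} \ll \|x\|_{2p}^{2p}$. What is needed is an inequality of the form $\|Tv\|_{2p} \lesssim \|v\|_2$ for the operator $T$ that projects onto low-eigenvalue subspaces (or equivalently a hypercontractive bound for $e^{-tL}$ on the eigenspaces in question), which would let one conclude that spectral projection cannot concentrate $\ell_2$ mass into a sharp spike without also committing $\ell_2$ mass. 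I would expect the actual proof to invoke such a hypercontractivity statement as a black box on the discrete noise operator $I - \alpha L$ and then reduce both directions to it, so that the combinatorial--analytic bridge becomes a one-line computation on either side.
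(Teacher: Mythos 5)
The paper only sketches part~1 (the Markov-on-eigenvalues argument followed by projection) and explicitly omits part~2 as ``harder'', so your part~1 is in the same spirit as the paper's sketch; but both halves of your write-up have concrete gaps. In part~1, taking the threshold $\varphi+\eta$ with $\eta=o(1)$ gives only $\|z\|_2^2\le\tfrac{\varphi}{\varphi+\eta}\|x\|_2^2=(1-o(1))\|x\|_2^2$, which is vacuous since $\varphi$ is a fixed constant; you need the threshold to exceed $\varphi$ by a constant \emph{factor} (the paper projects onto $W_{\le c\varphi}$ for large $c$), and then you retain only a $(1-1/c)$ fraction of the $\ell_2$ mass, which is still enough. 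More seriously, the interpolation $\|z\|_{2p}^{2p}\le\|z\|_\infty^{2p-2}\|z\|_2^2$ needs $\|z\|_\infty=O(1)$, but $\|P\mathds 1_S\|_\infty$ can be as large as $\Omega(\sqrt{|S|})$ for a general orthogonal projection $P$ (take $P=vv^\top$ with $v$ splitting its mass between a single coordinate and a uniform spread over $S$), so this step fails as stated. The standard fix avoids $\ell_\infty$ entirely: lower bound $\|P\mathds 1_S\|_{2p}\ge\langle P\mathds 1_S,\mathds 1_S\rangle/\|\mathds 1_S\|_{2p/(2p-1)}=\|P\mathds 1_S\|_2^2/|S|^{(2p-1)/(2p)}$ by H\"older duality (using that $\mathds 1_S$ is a $0/1$ vector), which together with $\|P\mathds 1_S\|_2^2\ge(1-1/c)|S|$ and $\|P\mathds 1_S\|_2\le\sqrt{|S|}$ yields $(O(\delta),p)$-sparsity of the projection with $\delta=|S|/n$.

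For part~2 your sketch leaves open precisely the step that makes this direction hard. The Cheeger sweep averages $|E(S_t,\bar S_t)|$ against $d|S_t|$ over \emph{all} thresholds and gives no control on $|S_t|$; once you restrict to the range of thresholds where the level sets are small, the averaging identity no longer produces a low-expansion set, and ``a short averaging argument should give a single threshold achieving both'' is exactly the claim that needs proof---the actual argument in~\cite{BarakBHKSZ12} is a genuinely different localization argument and is why the paper calls this direction harder. Finally, the plan to invoke a hypercontractive bound $\|Tv\|_{2p}\lesssim\|v\|_2$ for the low-eigenvalue projector (or for $I-\alpha L$) as a black box is circular: the theorem is stated for an arbitrary regular graph, for which no such bound is available a priori; indeed, the bounded $2\to 2p$ norm of this projector is precisely the property the theorem shows to be equivalent to small-set expansion, and certifying it is the open question the surrounding section is about.
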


The first direction of Theorem~\ref{thm:SSEvsNorms}  follows from the above reasoning, and was known before the work of~\cite{BarakBHKSZ12}.
The second direction is harder, and we omit the proof here.
The theorem reduces the question of determining whether there for small sets $S$, the minimum of $\phi_G(S)$ is close to one or close to zero,
into the question of bounding the maximum of $\bbE_i x_i^{2p}$ over all unit vectors in some subspace.
The latter question is a polynomial optimization problem of the type the SOS algorithm is designed for!
Thus, we see that we could potentially resolve the SSEH if we could answer the following question:
\begin{quote}
\emph{What is the degree of SOS proofs needed to certify that the $2p$-norm is bounded for all (Euclidean norm) unit vectors in some subspace $W$?}
\end{quote}

We still don't know the answer to this question in full generality, but we do have some interesting special cases.
Lemma~\ref{lem:random-norm-bound} of Section~\ref{sec:sparse-vector-recovery} implies that if $W$ is a random subspace of dimension $\ll \sqrt{n}$ then we can certify
that $\bbE_i x_i^4 \leq O(\bbE_i x_i^2)^2$ for all $x\in W$ via a degree-$4$ SOS proof.
This is optimal, as the $4$-norm simply won't be bounded for dimensions larger than $\sqrt{n}$:

\begin{lemma} \label{lem:dimbound} Let $W \subseteq \mathbb{R}^n$ have dimension $d$ and $p\geq 2$, then there exists a unit vector $x\in W$ such that
\[
\bbE_i x_i^{2p} \geq \tfrac{d^{p}}{n}(\bbE_i x_i^2)^p
\]
\end{lemma}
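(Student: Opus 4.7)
The plan is to exhibit an explicit unit vector in $W$ that is heavily concentrated on a single coordinate. Let $P\in\mathbb{R}^{n\times n}$ be the orthogonal projector onto $W$, so that $P=P^\top$ and $P^2=P$. Since $P$ has rank $d$, its trace $\sum_{i=1}^n P_{ii}$ equals $d$, and therefore by pigeonhole there exists an index $i^\ast\in[n]$ with $P_{i^\ast i^\ast}\ge d/n$.

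Next, I would take the candidate $x := Pe_{i^\ast}/\lVert Pe_{i^\ast}\rVert_2$, where $e_{i^\ast}$ is the $i^\ast$-th standard basis vector. This is a unit vector lying in $W$, and the key identity is
$\lVert Pe_{i^\ast}\rVert_2^2 = \langle Pe_{i^\ast},e_{i^\ast}\rangle = P_{i^\ast i^\ast}$
(using $P^\top P=P$). Consequently the $i^\ast$-th coordinate of $x$ is
$x_{i^\ast} = P_{i^\ast i^\ast}/\sqrt{P_{i^\ast i^\ast}} = \sqrt{P_{i^\ast i^\ast}} \ge \sqrt{d/n}$.

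To conclude, since a single coordinate already contributes at least $(d/n)^p$, we have $\lVert x\rVert_{2p}^{2p}\ge x_{i^\ast}^{2p}\ge (d/n)^p$, and dividing by $n$ gives $\bbE_i x_i^{2p}\ge d^p/n^{p+1}$. Because $x$ is a unit vector, $(\bbE_i x_i^2)^p = (1/n)^p$, so the previous inequality rearranges exactly to $\bbE_i x_i^{2p}\ge (d^p/n)(\bbE_i x_i^2)^p$, as claimed.

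There is essentially no obstacle here: the whole argument is pigeonhole applied to the trace of the projector, combined with the elementary identity $\lVert Pe_{i^\ast}\rVert_2^2 = P_{i^\ast i^\ast}$. In particular the hypothesis $p\ge 2$ is never used, and the vector constructed has a single large entry, showing that the obstruction to bounding $2p$-norms in $W$ arises from sparse ``spike'' vectors aligned with coordinate directions.
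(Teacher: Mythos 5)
Your proof is correct and takes essentially the same approach as the paper's: both arguments exhibit the candidate vector $Pe_{i^\ast}$ for a coordinate $i^\ast$ where the diagonal entry of the projector is at least $d/n$, and note that this single coordinate already forces the $2p$-th moment to be large. Your observation that $\lVert Pe_i\rVert_2^2 = P_{ii}$ lets you replace the paper's Cauchy--Schwarz step and Frobenius-norm computation with a plain pigeonhole on $\mathrm{Tr}(P)=d$, a mild streamlining of the same idea.
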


Hence in particular any subspace of dimension $d \gg n^{1/p}$ contains a $(o(1),p)$-sparse vector.
  
\begin{proof}[Proof of Lemma~\ref{lem:dimbound}]
Let $P$ be the matrix corresponding to the projection operator to the subspace $W$.
Note that $P$ has $d$ eigenvalues equalling $1$ and the rest equal $0$,
and hence $\mathrm{Tr}(P)=d$ and the Frobenius norm squared of $P$, defined as $\sum P_{i,j}^2$, also equals $d$.
Let $x^i = Pe^i$ where $e^i$ is the $i^{th}$ standard basis vector.
Then $\sum x^i_i$ is the trace of $P$ which equals $d$ and hence using
Cauchy-Schwarz
\[
\sum (x^i_i)^2 \geq \frac{1}{n}\left(\sum x_i \right)^2 = \frac{\mathrm{Tr}(P)^2}{n}  = \frac{d^2}{n} \;.
\]
On the other hand,
\[
\sum_i \sum_j (x^i)_j^2  = \sum_{i,j} (Pe^i)_j^2 = \sum P_{i,j}^2 = d \;.
\]
Therefore, by the inequality $(\sum a_i)/(\sum b_i) \leq \max a_i/b_i$, there exists an $i$ such that
if we let $x =x^i$ then
\begin{math}
  x_i^2 \geq \tfrac{d}{n}\sum_j x_j^2 = d \bbE x_j^2.
\end{math}
Hence, just the contribution of the $i^{th}$ coordinate to the expectation achieves
\begin{math}
  \bbE_j x_j^{2p} \geq \tfrac{d^{p}}{n} \left( \bbE_j x_j^2 \right)^p
  .
\end{math}
\end{proof}

Lemma~\ref{lem:dimbound} implies the following corollary:

\begin{corollary}\label{cor:dimbound} Let $p,n\in \mathbb{N}$, and $W$ be subspace of $\mathbb{R}^n$.
If $\bbE_i x_i^{2p} \leq O(\bbE_i x_i^2)^p)$,
then there is an $O(n^{1/p})$-degree SOS proof for
this fact.
(The constants in the $O(\cdot)$ notation can depend on $p$ but not on $n$.)
\end{corollary}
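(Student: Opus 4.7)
The plan is to combine Lemma~\ref{lem:dimbound} with an SOS certificate whose degree is controlled by the dimension of $W$.

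First, I would apply Lemma~\ref{lem:dimbound} in contrapositive form. If $\bbE_i x_i^{2p}\leq C\cdot(\bbE_i x_i^2)^p$ holds for every $x\in W$ (with $C$ the absolute constant hidden in the $O(\cdot)$), then $d:=\dim W$ must satisfy $d^p/n\leq C$, so $d\leq (Cn)^{1/p}=O(n^{1/p})$. This reduces the problem of certifying the inequality over an ambient $n$-dimensional subspace to one over a subspace whose intrinsic dimension is already $O(n^{1/p})$.

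Next, I would pick an orthonormal basis $w_1,\ldots,w_d$ of $W$, let $v_i\in\mathbb{R}^d$ be the $i$-th row of the $n\times d$ basis matrix, and reparameterize $x\in W$ by $c\in\mathbb{R}^d$ via $x_i=\langle v_i,c\rangle$, so that $\lVert x\rVert_2^2=\lVert c\rVert_2^2$ and $\sum_i v_iv_i^\top=I_d$. The inequality to be certified becomes the homogeneous polynomial inequality
\begin{displaymath}
  \tfrac{C}{n^p}\lVert c\rVert_2^{2p}\;-\;\tfrac{1}{n}\sum_{i=1}^n\langle v_i,c\rangle^{2p}\;\succcurlyeq\;0
\end{displaymath}
in the $d$ variables $c$; by homogeneity it suffices to certify it on the sphere $\lVert c\rVert_2^2=1$.

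Finally, I would construct an SOS certificate of degree $O(d)=O(n^{1/p})$ for this inequality by allowing a constant-factor slack (proving the bound with $2C$ in place of $C$), which makes the polynomial strictly positive on $S^{d-1}$. Then a quantitative Positivstellensatz for the sphere produces an SOS representation modulo $\lVert c\rVert_2^2-1$ whose degree is controlled by $d$ and $p$ (with $p$ constant and the slackness a constant, the standard Putinar/Schm\"udgen-style bounds give degree $O(d)$). Translating back to the $x$-coordinates via the linear equations that cut out $W$ preserves the degree, yielding the claimed $O(n^{1/p})$-degree SOS proof.

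The main obstacle is the quantitative degree bound in the last step: generic quantitative Positivstellensatz bounds depend both on the dimension and inversely on the minimum of the polynomial on the sphere, and this inverse dependence can spoil the $O(d)$ bound. To sidestep this, I would try to exploit the concrete structure of the polynomial $\tfrac{1}{n}\sum_i\langle v_i,c\rangle^{2p}$ as a positive combination of $2p$-th powers of linear forms coming from a tight frame, for instance by matrix-flattening the associated symmetric $2p$-tensor and controlling its spectral norm by iterated Cauchy--Schwarz in the spirit of Lemma~\ref{lem:random-norm-bound}, so that the certificate is produced constructively and the degree depends only on $d$ and $p$, not on slackness.
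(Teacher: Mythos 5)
Your proposal matches the paper's own (sketched) argument: apply Lemma~\ref{lem:dimbound} in contrapositive to conclude $\dim W = O(n^{1/p})$, restrict to the sphere in that subspace, and invoke the known fact that approximately certifying a bound on a constant-degree polynomial over the $d$-dimensional sphere needs SOS degree only $O(d)$ (the paper cites~\cite{doherty2012convergence} for exactly this). Your concern about the slackness dependence in generic Positivstellensatz bounds is the right thing to worry about, but since the corollary only claims the bound up to an $O(\cdot)$ constant, the constant-factor slack you allow is precisely what makes the cited quantitative results applicable.
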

\begin{proof}[Proof sketch]
By Lemma~\ref{lem:dimbound}, the condition implies that $d=\dim W \leq O(n^{1/p})$,
and it is known that approximately bounding a degree-$O(1)$ polynomial on the $d$-dimensional sphere
requires an SOS proof of at most $O(d)$ degree (e.g., see~\cite{doherty2012convergence} and the references therein).
\end{proof}

Combining Corollary~\ref{cor:dimbound} with Theorem~\ref{thm:SSEvsNorms} implies that for every $\varepsilon,\delta$ there exists some $\tau$ (tending to zero
with $\varepsilon$),  such that if we want to distinguish between the case that an $n$-vertex graph $G$ satisfies $\phi_G(S) \leq \varepsilon$ for every $|S|\leq \delta n$,
and the case that there exists some $S$ of size at most $\delta n$ with $\phi_G(S) \geq 1-\varepsilon$, then we can do so using
a degree $n^{\tau}$ SOS proofs, and hence in $\exp(O(n^{\tau}))$ time.
This is much better than the trivial $\binom{n}{\delta n}$ time algorithm that enumerates all possible sets.
Similar ideas can be used to achieve an algorithm with a similar running time for the problem underlying the Unique Games Conjecture~\cite{AroraBS10}.
If these algorithms could be improved so the exponent $\tau$ tends to zero with $n$ for a fixed $\varepsilon$, this would essentially refute the
SSEH and UGC.

Thus, the question is whether Corollary~\ref{cor:dimbound} is the best we could do.
As we've seen, Lemma~\ref{lem:random-norm-bound} shows that for random subspaces we can do much better, namely certify the bound with a constant degree proof.
Two other results are known of that flavor.
Barak, Kelner and Steurer~\cite{BarakKS14} showed that if a $d$-dimensional subspace $W$ does not contain
a $(\delta,2)$-sparse vector, then there is an $O(1)$-degree SOS proof that it does not contain (or even almost contains) a vector with $O(\tfrac{\delta n}{d^{1/3}})$ nonzero coordinates.
If the dependence on $d$ could be eliminated (even at a significant cost to the degree), then this would also refute the SSEH.
Barak, Brand{\~a}o, Harrow, Kelner, Steurer and Zhou~\cite{BarakBHKSZ12} gave an $O(1)$-degree SOS proof
for the so-called ``Bonami-Beckner-Gross $(2,4)$ hypercontractivity theorem`` (see~\cite[Chap.~9]{Odonnell14}). This is the statement that for every constant $k$, the subspace $W_k \subseteq \mathbb{R}^{2^t}$ containing the evaluations of all degree $\leq k$ polynomials on the points $\{ \pm 1 \}^t$ does not contain an $(o(1),2)$-sparse vector, and specifically satisfies for all $x\in W_k$,
\begin{equation}
\label{eq:hyper-contract}
\bbE x_i^4 \leq 9^k ( \bbE x_i^2 )^2
\,.
\end{equation}
On its own this might not seem so impressive, as this is just one particular subspace.
However, this particular subspace underlies much of the evidence that has been offered so far in support of both the UGC and SSEH conjectures.
The main evidence for the UGC/SSEH consists of several papers such as~\cite{KhotV05,KhotS09,RaghavendraS09,BarakGHMRS12} that verified the predictions of these conjectures by proving that various natural algorithms indeed fail to solve some of the computational problems that are hard if the conjectures are true.
These results all have the form of coming up with a ``hard instance'' $G$ on which some algorithm $\mathcal{A}$ fails, and so to prove such a result one needs to do two things: \textbf{(i)} compute (or bound) the true value of the parameter on $G$, and \textbf{(ii)} show that the value that $\mathcal{A}$ outputs on $G$ is (sufficiently) different than this true value.
It turns out that all of these papers, the proof of \textbf{(i)} can be formulated as low degree SOS proof, and in fact the heart of these proofs is the bound (\ref{eq:hyper-contract}).
Therefore, the results of~\cite{BarakBHKSZ12} showed that all these  ``hard instances'' can in fact be solved by the SOS algorithm using a constant degree.
This means that at the moment, we don't even have any example of an instance for the problems underlying the SSEH and UGC that can be reasonably \emph{conjectured} (let alone proved) hard for the constant degree SOS algorithm.
This does not mean that such instances do not exist, but is suggestive that we have not yet seen the last algorithmic word on this question.


\paragraph{Acknowledgments.}
We thank Amir Ali Ahmadi for providing us with references on the diverse applications of the SOS method.


\newcommand{\etalchar}[1]{$^{#1}$}
\def\cprime{$'$} \def\cprime{$'$} \def\cprime{$'$} \def\cprime{$'$}
  \def\cprime{$'$} \def\cprime{$'$} \def\cprime{$'$} \def\cprime{$'$}
  \def\cprime{$'$} \def\cprime{$'$} \def\cprime{$'$}
  \def\polhk#1{\setbox0=\hbox{#1}{\ooalign{\hidewidth
  \lower1.5ex\hbox{`}\hidewidth\crcr\unhbox0}}} \def\cprime{$'$}
  \def\cprime{$'$} \def\cprime{$'$} \def\cprime{$'$} \def\cprime{$'$}
  \def\cprime{$'$} \def\cprime{$'$} \def\cprime{$'$} \def\cprime{$'$}
  \def\cprime{$'$} \def\cprime{$'$}
  \def\cfac#1{\ifmmode\setbox7\hbox{$\accent"5E#1$}\else
  \setbox7\hbox{\accent"5E#1}\penalty 10000\relax\fi\raise 1\ht7
  \hbox{\lower1.15ex\hbox to 1\wd7{\hss\accent"13\hss}}\penalty 10000
  \hskip-1\wd7\penalty 10000\box7} \def\cprime{$'$} \def\cprime{$'$}
  \def\cprime{$'$} \def\cprime{$'$} \def\cprime{$'$} \def\cprime{$'$}
  \def\ocirc#1{\ifmmode\setbox0=\hbox{$#1$}\dimen0=\ht0 \advance\dimen0
  by1pt\rlap{\hbox to\wd0{\hss\raise\dimen0
  \hbox{\hskip.2em$\scriptscriptstyle\circ$}\hss}}#1\else {\accent"17 #1}\fi}
\makeatletter
\providecommand{\biblkeystyle}[1]{#1}
\providecommand{\biblbegin}{}
\providecommand{\biblend}{}
\@ifundefined{beginbibabs}{\def\beginbibabs{\begin{quotation}\noindent}
\def\endbibabs{\end{quotation}}}{}
\makeatother

\end{document}